\newcommand{\qw}[1][-1]{\ar @{-} [0,#1]}
\newcommand{\qwx}[1][-1]{\ar @{-} [#1,0]}
\newcommand{\control}{*!<0em,.025em>-=-<.2em>{\bullet}}
\newcommand{\controlo}{*+<.01em>{\xy -<.095em>*\xycircle<.19em>{} \endxy}}
\newcommand{\ctrl}[1]{\control \qwx[#1] \qw}
\newcommand{\ctrlo}[1]{\controlo \qwx[#1] \qw}
\newcommand{\targ}{*+<.02em,.02em>{\xy ="i","i"-<.39em,0em>;"i"+<.39em,0em> **\dir{-}, "i"-<0em,.39em>;"i"+<0em,.39em> **\dir{-},"i"*\xycircle<.4em>{} \endxy} \qw}
\newcommand{\rstick}[1]{*!L!<-.5em,0em>=<0em>{#1}}
\newcommand{\lstick}[1]{*!R!<.5em,0em>=<0em>{#1}}
\newcommand{\Qcircuit}{\xymatrix @*=<0em>}
\newtheorem{theorem}{Theorem}
\newtheorem{definition}[theorem]{Definition}
\newcommand{\REVS}{{\textsc{Revs}}}
\newcommand{\LIQUID}{{LIQ{\em ui}$|\rangle$}}
\newcommand{\mypar}[1]{{\textbf{#1.}}}
\newcommand{\nix}[1]{{}}
\definecolor{greencomments}{rgb}{0,0.5,0}
\lstdefinelanguage{FSharp}%
{morekeywords={let, new, match, with, rec, open, module, namespace, type, of, member, %
and, for, while, true, false, in, do, begin, fun, function, return, yield, try, %
mutable, if, then, else, cloud, async, static, use, abstract, interface, inherit, finally },
  otherkeywords={ let!, return!, do!, yield!, use!, var, from, select, where, order, by },
  keywordstyle=\color{blue},
  sensitive=true,
  basicstyle=\ttfamily,
	breaklines=true,
  xleftmargin=\parindent,
  aboveskip=\bigskipamount,
	tabsize=4,
  morecomment=[l][\color{greencomments}]{///},
  morecomment=[l][\color{greencomments}]{//},
  morecomment=[s][\color{greencomments}]{{(*}{*)}},
  morestring=[b]",
  showstringspaces=false,
  literate={`}{\`}1,
  stringstyle=\color{red},
}
\title{Reversible circuit compilation with space constraints}
\author{Alex Parent, Martin Roetteler, and Krysta M.~Svore\thanks{This
  work was carried out while the first author was a summer intern with the Quantum Architectures and Computation Group (QuArC), Microsoft Research, Redmond, WA 98052, U.S.A. Corresponding author email: martinro@microsoft.com}
}
\begin{document}

\maketitle

\begin{abstract}
\noindent 
We develop a framework for resource efficient compilation of higher-level programs into lower-level reversible circuits.
Our main focus is on optimizing the memory footprint of the resulting reversible networks.
This is motivated by %quantum computing as being the main application for reversible circuit synthesis together with
the limited availability of qubits for the foreseeable future.
We apply three main techniques to keep the number of required qubits small when computing classical, irreversible computations by means of reversible networks: first, wherever possible we allow the compiler to make use of in-place functions to modify some of the variables.
Second, an intermediate representation is introduced that allows to trace data dependencies within the program, allowing to clean up qubits early. This 
%This allows to determine the order in which ancilla qubits have to be cleaned up and helps in identifying subsets of variables that are no longer needed for subsequent parts of the computation.
%Cleaning those up 
realizes an analog to ``garbage collection'' for reversible circuits.
Third, we use the concept of so-called pebble games to transform irreversible programs into reversible programs under space constraints, allowing for data to be erased and recomputed if needed.

We introduce \REVS, a compiler for reversible circuits that can translate a subset of the functional programming language F$\#$ into Toffoli networks which can then be further interpreted for instance in \LIQUID, a domain-specific language for quantum computing and which is also embedded into F$\#$.
We discuss a number of test cases that illustrate the advantages of our approach 
%but that at the same time serve as experimental data for problems at scale: 
including reversible implementations of SHA-2 and other cryptographic hash-functions, reversible integer arithmetic, as well as a test-bench of combinational circuits used in classical circuit synthesis.
Compared to Bennett's method, \REVS{} can reduce space complexity by a factor of $4$ or more, while having an only moderate increase in circuit size as well as in the time it takes to compile the reversible networks.
\end{abstract}

\begin{keywords}
Reversible computation, Toffoli gates, quantum computation, quantum programming languages, data dependency analysis,  pebble games, hash functions.
\end{keywords}

%\tableofcontents

%%%%%%%%%%%%%%%%%%%%%%%%%%%%%%%%%%%%%%%%%%%%%%%%%%%%%%%%%%%%%%%%%%%%
%
% Section: Introduction
%
%%%%%%%%%%%%%%%%%%%%%%%%%%%%%%%%%%%%%%%%%%%%%%%%%%%%%%%%%%%%%%%%%%%%

\section{Introduction}\label{intro}

The observation that arbitrary computations can be carried out by a computational device in such a way that in principle each time-step can be reversed---allowing to recover the input from the output of a computation that has been orchestrated in such a fashion--- goes back to Bennett's landmark paper \cite{Bennett:73}.
While the original motivation for reversible computing was to demonstrate that the amount of heat generated by irreversible gates---as implied by Landauer's principle---can in principle be avoided by making computations that never erase any information, it transpires that compared to the actual energy dissipation of modern integrated chips, this saving in energy is quite small, see, e.g., the survey \cite{Markov:2014}. For modern chips, the amount of energy savings due to avoiding erasure of information would be more than $10$ orders of magnitude smaller than the amount of energy savings that arise from other dissipative processes that heat up the chip. Aside from an adiabatic regime where chips would operate at ultra-low power, yet ultra-slow, therefore arguably the main application of reversible computing is therefore in quantum computing, namely as a vehicle that allows a quantum computer to carry out any function that a classical computer might carry out.

It should be noted that the ability to compute classical functions is at the core of many interesting quantum algorithms, including Shor's algorithm for discrete log and factoring \cite{Shor:97} where the reversible computations are arithmetic operations in suitable algebraic data structures such as rings and fields. Another example is Grover's algorithm \cite{Grover:96}, where reversible computations feature as the operations required to implement the predicate that implicitly defines the solution of the search problem at hand.
Many variations of this general theme exist including quantum walk algorithms that allow to traverse graphs faster than classical algorithms can, in some cases even exponentially faster, as well as some algorithms for simulation of Hamiltonians, where reversible computations may be needed for the efficient accessing of the matrix elements of the underlying Hamiltonian \cite{BCC+:2014}. 

While this may illustrate the need for techniques to turn classical computations into quantum circuits, it also may serve as an illustration of the difficulties that such a translation will present to an compiler system that aims at supporting this translation from a classical computation. Such a classical description could be, say, a program expressed in a higher-level programming language such as C or Haskell. Among the difficulties are the following issues: (i) qubits that are used as intermediate scratch space during the computation have to be cleaned up at the end of the computation or else the interference effects, on which quantum computations typically rely heavily, would disappear which would render the computation useless, (ii) the number of qubits that are needed for scratch space grows linearly with the number of classical instructions if a simple method for turning the irreversible computation into a reversible one is used such as the original Bennett method \cite{Bennett:73}.
What is more, the simple methods for making circuits reversible are extremely inefficient regarding the load-factor of the computation, namely they lead to circuit that only manipulate a tiny subset of the qubits at a given time and leave the bulk of the qubits idle.
This is particularly troublesome as early quantum computers will contain relatively few qubits. In turn, they will neither be able to afford ample scratch space nor scratch space that stays idle for large portions of the computation. 
Happily, Bennett already pointed out a way out of this dilemma by studying time-space trade-offs for reversible computation, and by introducing reversible pebble games which allow to systematically study ways to save on scratch space at the expense of recomputing intermediate results.
To determine the best pebbling strategy for the dependency graph imposed by actual real-world programs, however, is a non-trivial matter.
In this paper we follow a pragmatic approach: we are interested in solutions that work in practice and allow to handle programs at scale.
As a guiding example we consider cryptographic hash-functions such as SHA-2, which can be thought of a Boolean function $f:\{0,1\}^N \rightarrow \{0,1\}^n$, where $n\ll N$ that has a simple and straightforward classical program for its evaluation that has no branchings and only uses simple Boolean functions such as XOR, AND, and bit rotations, however, which has internal {\em state} between rounds.
The fact that there is state prevents the Boolean function to be decomposed, thereby making purely truth-table or BDD based synthesis methods useless for this problem.
On the other hand, scalable approaches such as combinator-based rewriting of the input program as a classical circuit, followed by applying the Bennett method, also run into issues for SHA-2 as the number of rounds is high and because of the large number of scratch qubits per each round, the overall required space by such methods is high.

\mypar{Related work} There are several programming languages for quantum circuits, including \LIQUID~\cite{WS:2014}, Quipper~\cite{GLR+:2013a,GLR+:2013b}, and various other approaches \cite{Omer:2000,AG:2009,SV:2009,LST+:2013,LR:2013,HPJ+:2015,JFJ+:2012}. Quipper offers a monad that supports ``lifting'' of quantum circuits from classical functions.
%Explain why our approach is different.
%Highlight why our approach is better in the sense that typically a lot less ancillas are needed. Mention that number of available ancillas is an architecture parameter that is an input to the compiler.
One of the key differences between the Quipper approach and our approach is that we sacrifice the concept of linear logic which is a crucial concept underlying Quipper and Selinger and Valiron's earlier work on the quantum lambda calculus \cite{SV:2009}.
Actually, we take a step in the opposite direction: the fundamental reason for our space saving potential is by allowing mutable variables and in-place updates of variables.
However, there are two prices to be paid for having this advantage: first, we do not have a linear type system anymore.
We believe that this is not a dramatic disadvantage in practice (and indeed we are not aware of any quantum compilation system that would have implemented type checks for {\em linearity} of the type system anyways as this leads to conflicts with the host language): the main advantage of linear types is that automatically consistency with regards to non-cloning is ensured.
As we focus on subroutines that are entirely classical/reversible this problem does not present itself.
Second, we now have to make sure that bits that there is a way to uncompute bits that have been used deep inside a computation that might have involved mutable {\em and} immutable variables.
If for each newly computed result a fresh ancilla is used, this task is trivial: the ancilla still holds the computed value and in order to uncompute another value based on the ancilla value, the result is still there.
In our case, it might have happened that the data in the ancilla itself might have been overwritten!
In this case there must be a clean way to track back the data in order to be able to recompute it.
To this end, we introduce a data structure that we call MDD; this stands for ''mutable data dependency graph'' and allows to track precisely this information.

There are also prior approaches toward higher-level synthesis based on reversible programming languages such as Janus and various extensions \cite{YG:2007,Thomsen:2012,Perumalla:2014}. These are languages allows only such programs to be expressed that are invertible in the sense that it is possible to recover the input back from the output. Also flow-chart like languages can be defined as a refinement of these reversible languages \cite{YAG:2008a,YAG:2008b} which allow a more fine-grained notion of reversibility. These languages constrain the programmer in a certain sense by forcing him or her to express the program already in a reversible way. This makes for instance expressing a function that is believed to be a one-way permutation $\pi : \{0,1\}^n \rightarrow \{0,1\}^n$ difficult, whereas in our approach ideally it should be easy to express $\pi$, provided it has an efficient circuit. The compiler should then be able to find an efficient circuit implementing $(x,y) \mapsto (x, y \oplus \pi(x))$. Tasks like these are possible in Janus and flow-chart like languages, however, they are somewhat unnatural. 

Finally, we would like to point out that there is a well-established theory of reversible synthesis for Boolean functions, i.e., the task of implementing an (irreversible) function that is given by its truth table by means of a reversible circuit. This typically entails methods that are quite efficient for a bounded number of bits \cite{MMD:2003,MMD:2007,MBC:2008,WD:2010,SSP:2013,LJ:2014}, techniques that achieve reversible synthesis by using the quantum domain \cite{MS:2011}, template based rewriting techniques that enable peep-holing methods to be applied \cite{MDM:2005} and even a complete classification of the optimal reversible circuits for small number of bits \cite{GM:2012}. 
See also \cite{SM:2013} for a survey. All these techniques aim at an understanding of reversibility at the lower level. In the present paper we aim at approaching reversible synthesis starting from high-level descriptions in a functional programming language. In this sense our approach is in principle scalable, whereas truth-table based methods can only be applied to a small number of bits. However, our approach leverages the techniques developed for truth table based synthesis as the atomic cases in which the compiler, after suitable decompositions, falls back to a known method once the given program has been decomposed into a collection of functions that act on a small number of bits.

\mypar{Our Contributions} Our main innovation is an algorithm to compute a data structure that we call the mutable data dependency graph (MDD) from a program.
This data structure tracks the data flow during a classical, irreversible computation and allows to identify parts of the data flow where information can be overwritten as well as other parts where information can be uncomputed early as it is no longer needed.
These two techniques of overwrite, which are implemented using so-called in-place operations, and early cleanup, for which we use a strategy that can be interpreted as a particular pebble game that is played on the nodes of the data flow graph, constitute the main innovation of the present work.
We study different embodiments of the cleanup strategy, depending on how aggressively the cleanup is implemented.
We implemented a compiler, which we call~\REVS, that can take any program from a language that is a subset of all valid F$\#$ programs and that can turn such a program into a corresponding  reversible networks over the Toffoli gate set.
The~\REVS{} compiler can be interfaced with the \LIQUID{} quantum programming language in a natural way as the output Toffoli network of the~\REVS{} compilation can be directly imported as an internal representation into \LIQUID{} and be used as part of another quantum computation. Furthermore, the simulation and rendering capabilities of \LIQUID{} can be used to further process the~\REVS{} output. Also, we implemented a simple, light-weight simulator for Toffoli networks as a part of~\REVS{} that can be used for testing purposes, e.g., on a set of random input/output pairs. 

A further contribution of our work is to demonstrate that higher-level reversible synthesis can be done in a way that is much more space efficient than using the Bennett method which essentially introduces additional ancillas per each operation used in the irreversible implementation. In our example implementations of arithmetic operations such as integer addition and multiplication, as well as hash functions, such as SHA-2 and MD5, we typically saw space savings of our method over the Bennett method of a factor of $4$ or more.

%%%%%%%%%%%%%%%%%%%%%%%%%%%%%%%%%%%%%%%%%%%%%%%%%%%%%%%%%%%%%%%%%%%%
%
% Section: Reversible circuits background
%
%%%%%%%%%%%%%%%%%%%%%%%%%%%%%%%%%%%%%%%%%%%%%%%%%%%%%%%%%%%%%%%%%%%%

\section{Reversible computing}

It is known that the Toffoli gate \cite{NC:2000} which maps $(x,y,z) \mapsto (x,y,z\oplus xy)$ is universal for reversible computation. More precisely, it is easy to see that the group generated by all Toffoli, CNOT, and NOT gates on $n\geq 1$ bits is isomrphic to the alternating group $A_{2^n}$ of order $(2^n)!/2$. Hence, as $\mathbf{1} \otimes \pi = (\pi,\pi)$ is even for any permutation $\pi$, at the expense of at most 1 aditional bit any permuation can be implemented in terms of Toffoli gates. 

The number of Toffoli gates used in the implementation of a permutation $\pi$ is the basic measure of complexity that we use in this paper. This is motivated by universal fault-tolerant quantum computing where the cost of Clifford gates such as CNOT and NOT can usually be neglected, whereas the Toffoli gate has a substantial cost. In particular we ignore the cost of CNOT and NOT gates. It should be noted that other cost metrics such as multiple-controlled Toffoli gates have been studied in the literature which in turn can be related in terms of Toffoli cost, however, in our opinion counting of Toffoli gates is well-motivated due to the connection to fault-tolerant computing, whereas gate counts in terms of multiply controlled gates can hide a significant cost factor.

Bennett's work on reversible Turing
machines \cite{Bennett:73,Bennett:89} showed that it is possible to relate the cost of an irreversible circuit implementation of a function with that of a reversible circuit implementation. More precisely, if $x \mapsto f(x)$ denoting a Boolean function on $n$ bits that can be implemented with $K$ gates over $\{{\rm NOT}, {\rm AND}\}$, then the reversible function $(x,y) \mapsto (x,y\oplus f(x))$ can be implemented with at most $2K+n$ gates over the Toffoli gate set. The basic idea being to replace all AND gates with Toffoli gates, then perform the computation, copy out the result, and undo the computation, thereby cleaning up all bits. In a nutshell, the present paper tries to improve on the space-complexity of Bennett's strategy which is large as it scales with the circuit size of the irreversible circuit. 

To mitigate this, space-time trade-offs have been investigated via so-called reversible pebble games \cite{Bennett:89}. In general, pebble games are useful for studying programming languages and compiler construction, in particular for data dependency analysis. Data flow dependencies can be modeled by directed acyclic graphs \cite{ALS+:2007}. A pebble game played on this graph can then be used to model register allocation for the given data flow graph, and to explore time-space tradeoffs. The thesis \cite{Chan:2013} provides a summary of pebble games and a comparison between various notions of traditional pebble games that do not have to meet requirements of reversibility and Bennett's reversible pebble game. 

An important special case of reversible pebble games are games played on a directed line as they correspond to function with sequential dependency on intermediate results. We denote the time and space resources needed for the irreversible function by $T$ and $S$ and those for the reversible function by $T_{rev}$ and $S_{rev}$. Bennett's original paper \cite{Bennett:73} asserts that $T_{rev} = O(T)$ and $S_{rev} = O(ST)$ is possible, however, in terms of the required space this has significant downside as in the worst case $T$ might scale as $T=O(2^S)$. For Bennett's reversible pebble game, i.e., the pebble played on line, it is known \cite{BTV:2001} that if the resources are constrained to $k$ pebbles, then there are the following upper bounds: $T_{rev} = O(S 3^k 2^{O(T/2^k)})$ and $S_{rev}=O(kS)$. As limiting special cases \cite{BTV:2001} we obtain for $k=O(1)$ that $T_{rev}=O(S 2^{O{T}})$ and $S_{rev} = O(S)$ which is the Lange-McKenzie-Tapp strategy \cite{LMT:2000} that has exponential time cost but stays linear in the required space. Another special case is for $k=\log{T}$ for which we obtain Bennett's space/time tradeoff \cite{Bennett:89} which runs in $T_{rev}=O(S 3^{\log_2{T}}) =O(S T^{\log_2{3}})$ and $S_{rev} = O(S \log{T}) = O(S^2)$. One can further improve the time dependency \cite{Bennett:89,LS:90} to $T_{rev}= O(T^{1+\varepsilon}/S^\varepsilon)$ for any given $\varepsilon>0$ and space dependency $S_{rev}=O(S(1+\ln(T/S)))$. To the best of our knowledge, only little is known about the ultimate limit of these space-time tradeoffs, see however \cite{FA:2001} in which the authors provide an oracle separation between classical and reversible simulations that run in $S_{rev}=O(S)$ space and $T_{rev}=O(T)$ time. While for the directed line, the problem of finding the optimal pebbling strategies for a given space can be solved in practice quite well using dynamical programming \cite{Knill:95}. For general graphs, finding the optimal strategy is PSPACE complete \cite{Chan:2013}, i.e., it is unlikely to be solvable efficiently. 

A simple and easy to implement version of the 1D pebble games is the ``incremental'' pebble game.
In this pebble game we simply add pebbles until we run out.
We then remove as many pebbles as we can starting at the point where we ran out and use them to continue the computation.
We leave behind a pebble each time we do this.
It is easy to see that for some amount of pebbles $n$ we can pebble a distance that scales as $n + (n-1) + (n-2) + \dotsb + 1= \frac{n(n+1)}{2}$. 
And since we will pebble/unpebble a given node a maximum of 4 times (twice in the forward computation and twice again during clean-up), the total amount of computations is worst case is $4N$ where $N$ is the number of irreversible operations. In other words, this means that we get a scaling of $T_{rev} = O(T)$ and $S_{rev}=O(S \sqrt{T})$ for this incremental pebble strategy. 

\begin{figure}[htb]
\centering
\begin{tabular}{c@{\qquad}c@{\qquad}c}
\includegraphics[height=0.6cm]{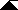} &
\includegraphics[height=1.2cm]{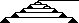} &
\includegraphics[height=1.6cm]{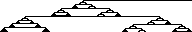} \\
(a) & (b) & (c)
\end{tabular}
\caption{\label{fig:pebblegames}
Visualization of three different pebble strategies that succeed in computing a pebble game on a linear graph, but use different strategies. Time is displayed from left to right, qubits are displayed from bottom to top. The strategy shown in (a) corresponds to Bennett's compute-copy-uncompute method \cite{Bennett:73} where the time cost is linear but the space cost is $O(S T)$, (b) is the incremental strategy mentioned in the text for which the time cost is linear but the space cost is $O(S \sqrt{T})$. The strategy shown in (c) corresponds to the Lange-McKenzie-Tapp method \cite{LMT:2000} that takes exponential time but has only a space cost of $O(S)$.
}
\end{figure}

In Figure \ref{fig:pebblegames} different strategies to clean up a computation on a linear 1D graph are visualized. Time is displayed from left to right, computational nodes are displayed from bottom to top---the output of the topmost gate being the final output---, and pebbles are denoted by black pixels. The state of the game after $i$ time steps is precisely the vertical slice obtained at location $i$. Bennett's original strategy corresponds to (a) in which case shown in the figure a sequence of $10$ gates is pebbled using $10$ pebbles and $19$ time steps. An extremely space-efficient strategy is shown in (c). Here in a fashion that resembles a fractal, a strategy is implemented that pebbles $32$ gates using only $6$ pebbles but which needs $193$ time steps. In (b), a possible middle ground is shown, namely an incremental heuristic that first uses up as many pebbles as possible, then aggressively cleans up all bits except for the last bit, and the repeats the process until it ultimately runs out of pebbles. Here $24$ gates are pebbled using $5$ pebbles.

\begin{figure}[htb]
\centering
{\hspace*{-0.3cm}\includegraphics[width=0.7\columnwidth]
{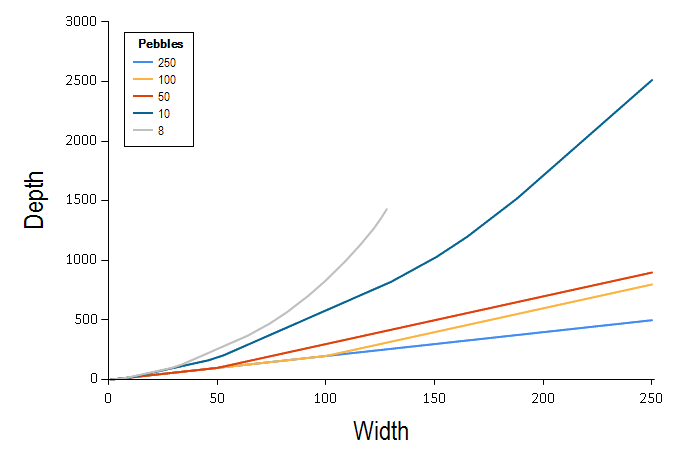}}
\caption{\label{fig:pebbleplot}
Comparison of the asymptotic scaling of different pebbling strategies. The lower curve is Bennett's original strategy \cite{Bennett:73} in which the number of pebbles scales with the number of time steps. The upper curve is the space-efficient but exponential-time Lange-McKenzie-Tapp strategy \cite{LMT:2000}. The other curves show the graceful degradation of the time complexity, here measured as circuit depth, with the reduction of the number of available pebbles.}
\end{figure}

Knill \cite{Knill:95} showed that for 1D pebble games the cost for the optimal pebbling strategy for any fixed number of available pebbles can be expressed recursively. Using dynamic programming, we used this to implement a search for optimal pebbling strategies on 1D graphs for a variety of different space constraints. The findings are summarized in the plot shown in Figure \ref{fig:pebbleplot}. In there the uppermost curve corresponds to the pebble games shown in Fig.~\ref{fig:pebblegames}(c). The lowermost curve is the original Bennett strategy shown in Fig.~\ref{fig:pebblegames}(a) in which the number of ancillas scales linearly with the depth of the circuit, Shown are also various other strategies that are optimal for given space resource constraints. To summarize our findings of this dynamic programming search for 1D pebble game under the given space constraints is that a significant reduction of the number of pebbles by a factor of $4$ or more leads to an almost negligible increase in the length of the reversible computation. The extremely parsimonious strategies that achieve a pebbling of a 1D line graph of size $n$ with $O(\log(n))$ pebbles do not seem advisable from a practical standpoint as the circuit size increases dramatically. We believe that in practice a middle ground will be best, in particular constant reduction in terms of the number of pebbles which then lead to only small increase in circuit size and compilation time.

While for the directed line, the problem of finding the optimal pebbling strategies for a given space can be solved in practice quite well using dynamical programming \cite{Knill:95}, for general graphs, finding the optimal strategy is PSPACE complete \cite{Chan:2013}, i.e., it is unlikely to be solvable efficiently. In our paper we employ heuristics to find pebbling strategies that work well in practice and often lead to better space complexity than Bennett's original method, which is the method implemented e.g. in Quipper \cite{GLR+:2013b}.  We make no claims about optimality of our strategies but present empirical evidence for space reductions of $S_{rev}$ of around $70\%$ at only a moderate increase in $T_{rev}$.  

The cleanup strategies we develop in the parts of the paper can be thought of as pebble games that are played on finite graphs, namely the dependency graphs of the functions that are computed by the program. In our paper we employ heuristics to find pebbling strategies that work well in practice and often lead to better space complexity than Bennett's original method, which is the method implemented e.g. in Quipper \cite{GLR+:2013b}.  We make no claims about optimality of our strategies but present empirical evidence for space reductions of $S_{rev}$ over the original Bennett strategy \cite{Bennett:73} of around a factor of $4$ or more at only a moderate increase in $T_{rev}$.

\subsection{Boolean functions}\label{sec:boolean}
An important special case of programs that need to be turned into reversible circuits are Boolean functions.

Boolean functions are used as the primitives in our implementation.
All supported boolean operators are converted into AND/XOR functions and grouped into boolean expressions.
The expressions are then converted into Toffoli/CNOT circuits while attempting to minimize ancilla use.
This is done by combining operations into expressions of the type:
\begin{lstlisting}[language=FSharp]
type BoolExp =
	| BVar of int
	| BAnd of BoolExp list
	| BXor of BoolExp list
    | BNot of BoolExp
\end{lstlisting}
Expressions are always given a target to be evaluated onto. Each BXor term can be constructed by evaluating each term then adding a CNOT from each of them to a given target. Each BAnd term can be constructed using a multiple control Toffoli decomposition targeted again to the given target. This means ancilla usage is limited to the bits required to preform all of the AND operations in the expression. In general, the output of a Boolean function is of the form $y \oplus e$ where $y$ is the target bit and $e$ is the given Boolean expression. Note that this allows the evaluation of $e$ by initializing $y=0$ as $0 \oplus e = e$.

It is possible to do further optimization by factoring the expression in an attempt to remove and operations.
For example $ab\oplus ac \oplus bc$ can be factored as $a(b \oplus c) \oplus bc$ so that it uses 2 and operations rather then 3.
Currently there is no automated factoring but if the expression is written in a factored form by the programmer it will result in better circuit generation.

Finally, we remark that some operations can be performed in-place in a reversible fashion. An operation is in-place if it modifies data without creating any ancilla.
For example the CNOT gate preforms the operation $(a,b)\mapsto (a,a\oplus b)$.
So if we wish to preform the operation $a\oplus b$ and do not require $b$ later in the circuit we do not need an additional ancilla to store the output.
Other examples for in-place operations are integer adders. More generally, it is easy to see that any function $f$ that a) implements a permutation, b) for which an efficient circuit is known and c) for which $f^{-1}$ can also be implemented efficiently, gives rise to an efficient in-place reversible circuit that maps $x \mapsto f(x)$.

%%%%%%%%%%%%%%%%%%%%%%%%%%%%%%%%%%%%%%%%%%%%%%%%%%%%%%%%%%%%%%%%%%%%
%
% Section: The REVS compiler
%
%%%%%%%%%%%%%%%%%%%%%%%%%%%%%%%%%%%%%%%%%%%%%%%%%%%%%%%%%%%%%%%%%%%%

\section{REVS: a compiler for reversible circuits}

\subsection{Dependency analysis}

Analyzing the dependencies between the instructions in a basic function, between functions, and between larger units of code is a fundamental topic in compiler design \cite{ALS+:2007,Muchnick:97}. Typically, dependency analysis consists of identifying basic units of codes and to identify them with nodes in a directed acyclic graph (DAG). The directed edges in the graph are the dependencies between the basic units, i.e., anything that might constrain the execution order for instance control dependencies that arise from the control flow in the program which in turn can be for instance branching that happen conditional on the value of a variable or, more simply, the causal dependencies that arise from one unit having to wait for the output of another unit before the computation can proceed.

In our case, the dependency graph is generated in a two step process. First, the F$\#$ compiler is invoked to generate an abstract syntax tree (AST) for the input program. This is done using the mechanism of reflection for which F$\#$ offers support in the form of so-called quotations \cite{SGC:2012}. Quotations have a simple syntax by surrounding expressions for which an abstract syntax expression is to be constructed with {\blue \texttt{<@ ... @>}}. F$\#$ quotations are types which implies that much of the type information present in the program as well as the expression based nature can be leveraged. In practice this means that the AST will already be represented in a form that can then be easily dispatched over by using {\blue \texttt{match}}
statement for the various constructors that might be used. Second, we then use so-called active patterns in our match statements to further aid with the process of walking the AST and turning it into an internal representation that represents the dependency graph of the program.

The nodes of this graph captures the control flow and data dependencies between expressions, but also identifies which blocks can be computed by in-place operations and which blocks have to be computed by out-of-place operations. Because of this latter feature is related to which elements of the dependency graph are mutable and which are not, we call this data structure the Mutable Data Dependency graph or MDD. Which parts of the code can be computed by in-in-place operation is inferred by which variables are labeled in F$\#$ as {\blue \texttt{mutable}} together with the external knowledge about whether for an expression involving these variables an in-place implementation is actually known. An example for the latter is the addition operation for which we can chose either an in-place implementation $(a,b) \mapsto (a,a+b)$ or an out-of-place implementation $(a,b,0) \mapsto (a,b,a+b)$.

The nodes of the MDD correspond to inputs, computations, initialized and cleaned-up bits. Inputs nodes can correspond to individual variables but also to entire arrays which are also represented as a single node and treated atomically. Computation nodes correspond to any expression that occurs in the program and that manipulates the data. Initialized and cleaned-up bits correspond to bits that are part of the computation and which can be used either as ancillas or to hold the actual final output of the computation. Initialization implies that those qubits are in the logical state $0$ and the cleaned-up state means these bits are known to be returned back in the state $0$.

\begin{algorithm}[hbt]
\caption{{\textsc{MDD}} Computes mutable data dependency graph.}
  \begin{algorithmic}[1]
      \Require {AST : The AST of a function to be compiled}
      \Procedure{resolveAST}{AST,G}
        \If {Root of AST is an operation}
          \For{input \textbf{in} inputs(AST)}
            \State inputIndex , G $\leftarrow$ ResolveAST(input,G)
            \State inputIndices   $\leftarrow$ inputIndex :: inputsIndices
          \EndFor
          \State newNode.type $\leftarrow$ OpType(head(AST))
        \State newNode.inputs $\leftarrow$ addInputArrows(inputIndices)
          \State G $\leftarrow$ AddNode(newNode)
          \State \textbf{return}  getIndex(newNode) , G
        \Else
          \State \textbf{return} getVarIndex(head(AST)) , G
        \EndIf
      \EndProcedure
      \State G $\leftarrow$ Add nodes for all inputs
      \State resolveAST(AST,G)
  \end{algorithmic}\label{alg:depgraph}
\end{algorithm}

The directed edges in a MDD come in two different kinds of flavors: data dependencies and mutations.
Data dependencies are denoted by dashed arrows and  represent any data dependency that one expression might have in relation to any other expression.
Mutations are denoted by bold arrows and represent parts of the program that are changed during the computation.
By tracking the flow of the mutations one can then ultimately determine the scheduling of the expressions onto reversible operations and re-use a pool of available ancillas which helps to reduce the overall space requirements of the computation, in some cases even drastically so.
A high level description of the algorithm that computes the MDD from the AST produced by the F$\#$ compiler is given in Algorithm~\ref{alg:depgraph}.

When resolving the AST of a function, each node will either be another function or an input variable. If the node is a function, Algorithm \ref{alg:depgraph} recursively computes the AST for all of the function inputs adding the results to the graph. Upon doing so, we use the index numbers of these results as the inputs for the operation and then add the operation to the graph. If the node is a variable, the algorithm looks up its name in a map of currently defined variables and returns an index to its node. The type of the operation determines which arrows will be solid input arrows and which will be data dependencies, i.e., controls. In the example figures given in this paper, the paths from inputs to outputs that indicate modifications are drawn using bold arrows, whereas controls are shown as dashed arrows.
\nix{In the F\# implementation assignment ($\leftarrow$) is used when we want an input arrow. This means we need not worry about later statements referencing the earlier value of the variable.}
As the algorithm visits each node in the AST and does a modification of the graph that involves only a constant number of elements, it is clear that the overall runtime of Algorithm \ref{alg:depgraph} is $O(n)$, where $n$ is the number of nodes in the AST.

A simple example for this translation of a program into an MDD is given by the following program:

\begin{lstlisting}[language=FSharp]
	let f a b = a && b
\end{lstlisting}

In this program $f$ is simply the AND function of two inputs $a$ and $b$.
The MDD corresponding to the program is shown in Figure \ref{fig:simpleGraph}.

\begin{figure}[htb]
\centering
\includegraphics[width=0.3\columnwidth]{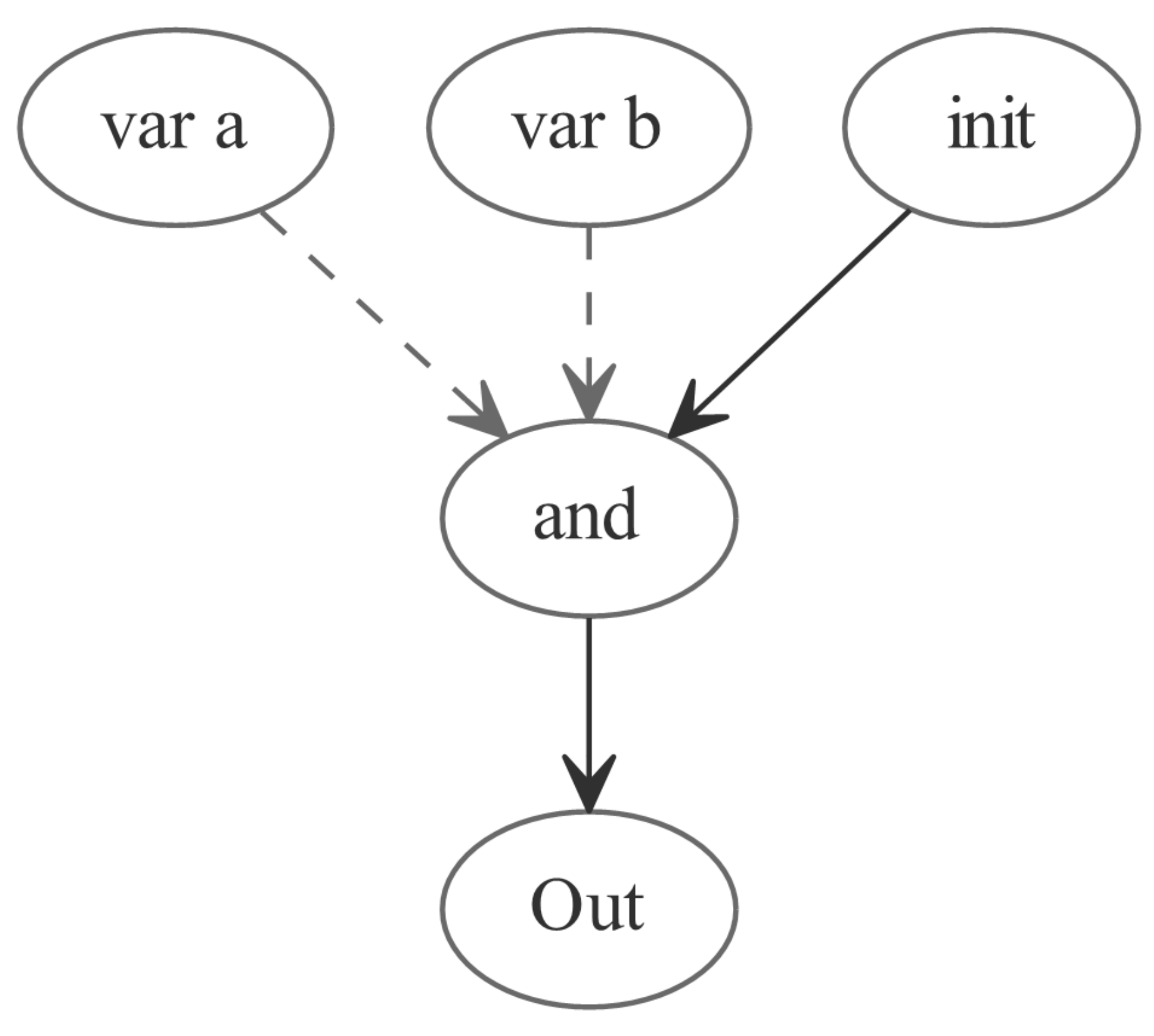}
\caption{\label{fig:simpleGraph}
MDD for $f(a,b)=a \land b$
}
\end{figure}

There are two input nodes in the MDD for $f$ labeled as ``var $a$'' and ``var $b$'' in the figure. Those nodes are immutable. Furthermore, there is one node initialized in $0$, denoted as ``init'' and one node which will contain the final output, denoted as ``Out''. Data dependencies are present in the node for the AND operation, meaning that a node for the AND operation is created and (dashed) input arrows pointing from the variables $a$ and $b$ are added.
The node ``init'' is used to hold the result. It points to the operation with a solid modification arrow. Finally an output node is added showing which value is being treated as the output of the function.
The final code emission by the compiler in this concrete case will use a library for Boolean expressions which in this concrete case then is invoked when mapping the MDD to a reversible circuit. The circuit corresponding to this particular program is just a single Toffoli gate with control qubits $a$ and $b$ and one target qubit.

\begin{figure}[htb]
\centering
\begin{lstlisting}[language=FSharp]
let xor4 (a:bool array) (b:bool array) =
	let c = Array.zeroCreate 4
	for i in 0 .. 3 do
			c.[i] <- a.[i] <> b.[i]
	c
let and4 (a:bool array) (b:bool array) =
	let d = Array.zeroCreate 4
	for i in 0 .. 3 do
		d.[i] <- a.[i] && b.[i]
	d
let mutable a = Array.zeroCreate 4
let b = Array.zeroCreate 4
let c = Array.zeroCreate 4
a <- xor4 a b
and4 a c
\end{lstlisting}
\caption{\label{fig:fsharpArray}
Simple F$\#$ snippet using an arrays and in place operations. The mutable data dependency (MDD) graph corresponding to the \texttt{and4} function is given in Figure \ref{fig:arrayGraph}
}
\end{figure}

A slightly more involved example is given by the code shown in Figure \ref{fig:fsharpArray}.
Here there are new several elements as compared to the simple Boolean example above that make the MDD construction slightly more non-trivial.
First, a number of arrays are used to store data in a way that allows for easy access and indexing.
Note that in F$\#$ the type {\blue \texttt{array}} is inherited from the .NET array type and by definition is a mutable type.
This information is used when the MDD for the program is constructed as the~\REVS{} compiler knows that in principle the values in the array can be updated and overwritten.
Whether this can actually be leveraged when compiling a reversible circuit will of course depend on other factors as well, namely whether the parts of the data that is invoked in assignments (denoted by {\blue \texttt{<-}}) is used at a later stage in the program, in which case the data might have to be recomputed.

Note further that there are basic control flow elements such as for-loops and function calls and Boolean connectives, namely the AND function that we already met in the previous example, and the XOR function, denoted by  {\blue \texttt{<>}}. The MDD corresponding to the main function {\blue and4} is shown in Figure \ref{fig:arrayGraph}.

\begin{figure}[htb]
\centering
\includegraphics[width=0.25\columnwidth]{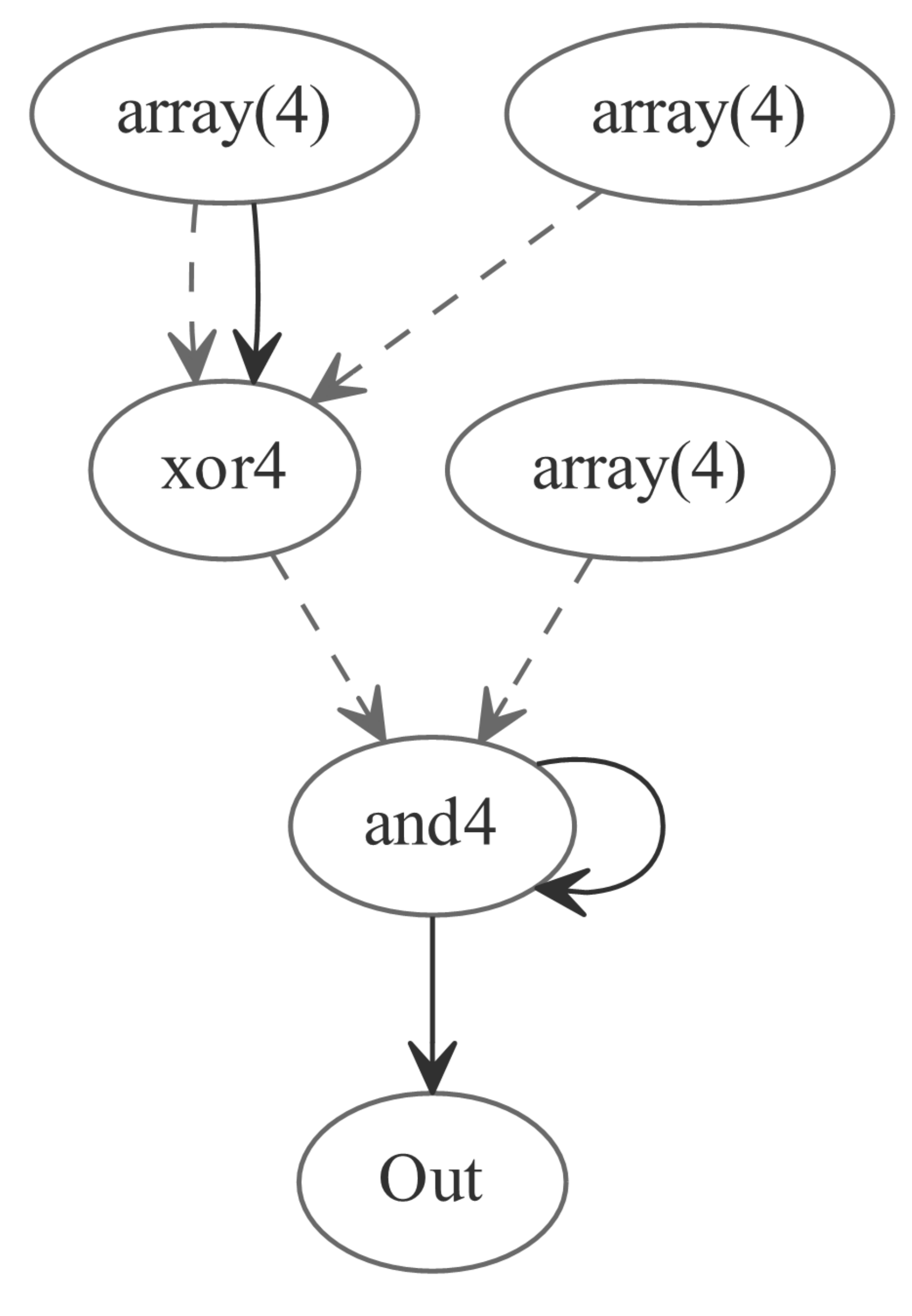}
\caption{\label{fig:arrayGraph}
MDD of the main function \texttt{and4} from Figure \ref{fig:fsharpArray}
}
\end{figure}

\subsection{Clean-up strategies}
If a node has no outgoing modification arrows and all operations pointed to by its dependency arrows have been completed it is no longer needed by the computation and may be cleaned.

\subsubsection{Eager Clean-up}
With eager clean-up we start at the end of the graph and work our way back in topological order.
When we find a node ($A$) which does not have an outgoing modification arrow we first find the node furthest along in topological order which depends on it ($B$).
We then consider all inputs in the modification path of $A$.
If any of the inputs have outgoing arrows modification arrows pointing levels previous to $B$ we may not clean the bit eagerly since its inputs are no longer available.
If the inputs do not have modification arrows pointing at levels previous to $B$ we can immediately clean it up but reversing all operations along its modification path.

For example see Figure \ref{fig:OrGraph} generated by the $f$ function in the code:
\begin{lstlisting}[language=FSharp]
let f a b = a || b
let g a b = a && b
let h a b c d = f a b <> g c d
\end{lstlisting}
The values initialized to hold the results of $ab$ and $a\oplus b$ are no longer required after they are used to calculate the final result.
The compiler notices that the original values used to produce them are still available and uncomputes the extra ancilla.
Notice in the circuit graph of this operation (Figure \ref{fig:fg}) that the freed ancilla can be reused in the other parts of the computation.

Ancilla currently in use are tracked during circuit generation using a heap data structure.
Whenever an ancilla is needed during the compilation from the MDD into a circuit, an identifier (implemented as a number) is taken off the heap and the bit matching that number is used.
After a bit has been cleaned up the corresponding identifier is pushed back onto the heap.
This allows ancilla to be reused and ensures that we use only the minimum indexed ancllia so we can avoid allocating unneeded space.

\begin{figure}[ht]
  \centering
  \begin{subfigure}[b]{0.45\textwidth}
    \includegraphics[width=\textwidth]{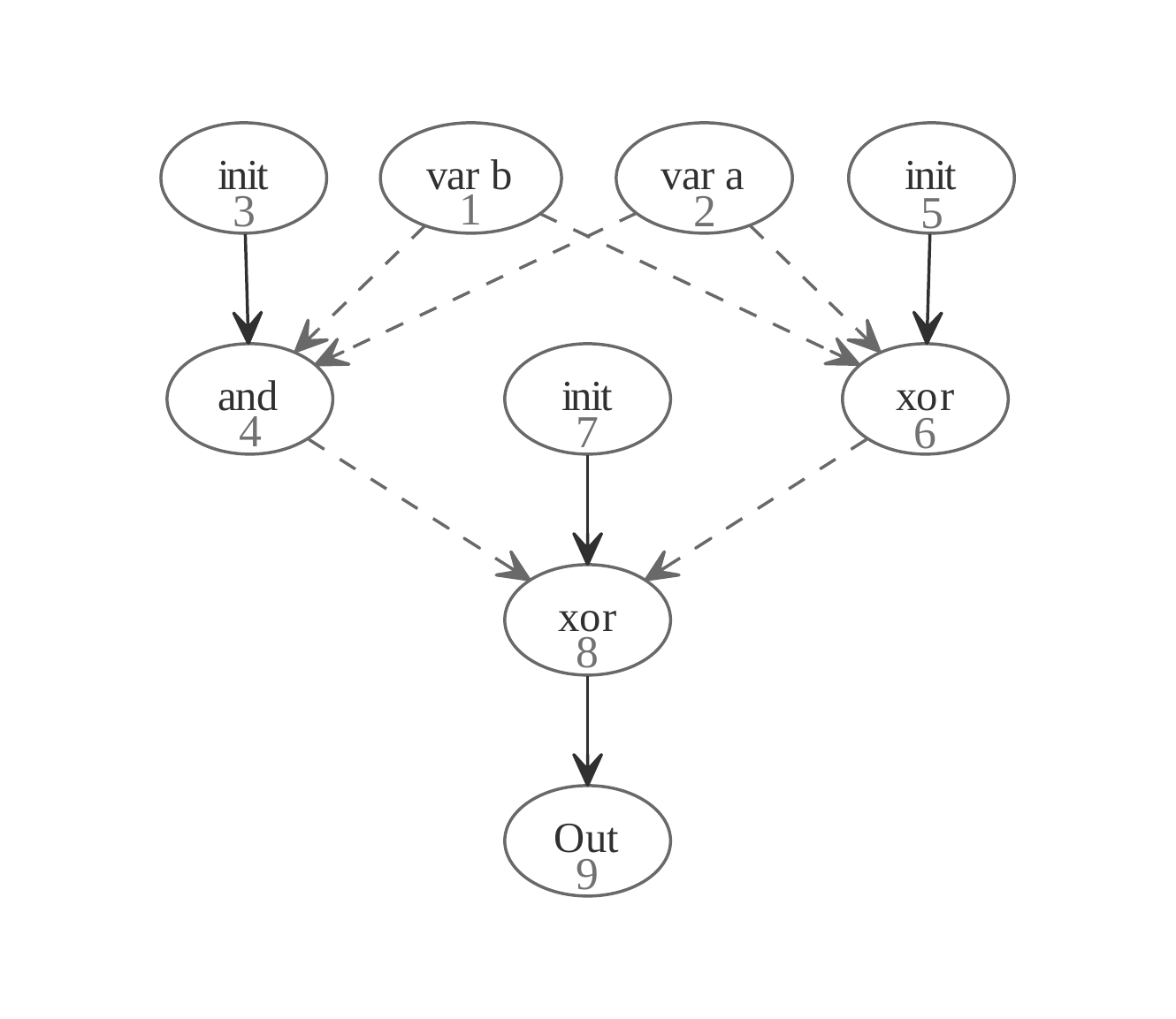}
    \caption{Before cleanup}
  \end{subfigure}
  \begin{subfigure}[b]{0.45\textwidth}
    \includegraphics[width=\textwidth]{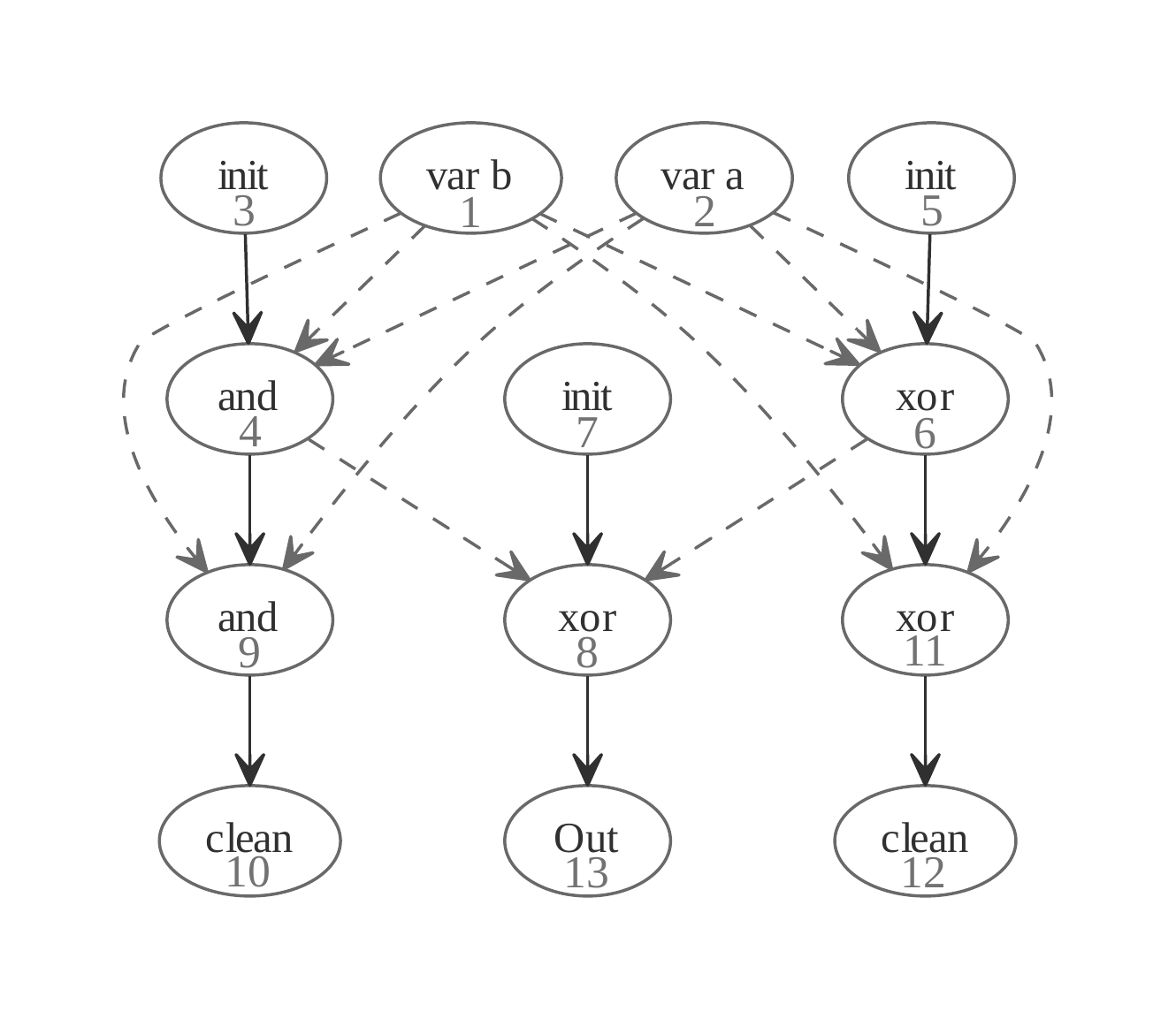}
    \caption{After cleanup}
  \end{subfigure}
  \caption{\label{fig:OrGraph}Dependency graph for $a||b$ which is represented as $ab \oplus a \oplus b$.
            The nodes are numbered with a possible topological order.
            Starting from the bottom we see that the lowest index which is not an input (\texttt{var}) or output is \texttt{xor} at 6.
            The last dependent node of 6 is xor at 8 and the input nodes (\texttt{var a} and \texttt{var b}) have not been modified after 6.
            We are therefore able to insert nodes after node 8 to cleanup 6.
            The nodes we insert are numbered 11 and 12 in the final graph.
            They are initially numbered 9 and 10 but when we cleanup on the left side we again insert the nodes after node 8,
            moving the indices of the previously inserted nodes up by 2.}
\end{figure}

Several comments about this example are in order. First, note that like all the other examples in this paper, the shown F$\#$ program can be compiled and executed on a classical computer just like any other F$\#$ program. By putting quotations around the program and sending it to the~\REVS{} compiler, we can give another semantic interpretation of the same program, namely to map it to a reversible network.
As can be seen in Figure \ref{fig:OrGraph}, operations on the same level can be reordered or even preformed in parallel without changing the outcome of the computation. Finally, the Toffoli network emitted by the compiler at the back-end, i.e., in the circuit generation phase, is shown in Figure \ref{fig:fg}.

\begin{figure}[htb]
\centering
\includegraphics[width=0.7\columnwidth]{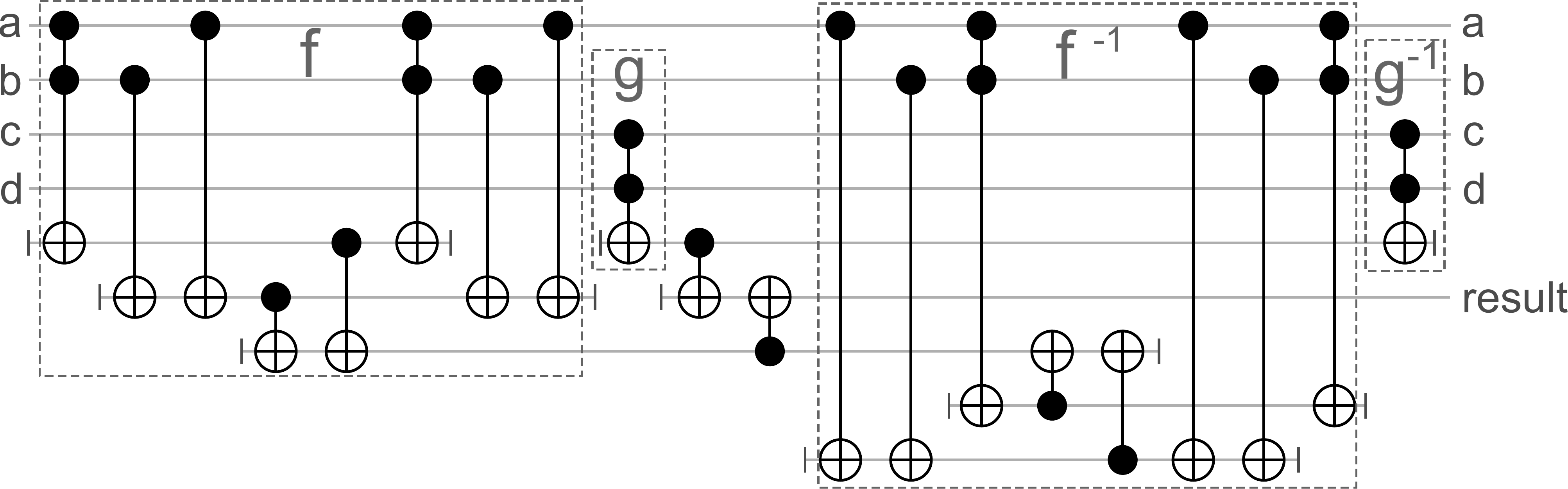}
\bigskip
\caption{\label{fig:fg} The resulting reversible network for computing the function $h(a,b,c,d) = f(a,b) \oplus g(c,d)$. The sub-circuit labeled with ``f'' is obtained from the MDD shown in Figure \ref{fig:OrGraph}. Similarly, a sub-circuit for $g$ is obtained from an MDD for $g$, which is an AND function. The final result is copied out into the ``result'' wire and then cleanup is performed by running the circuits for $f$ and $g$ in reverse. 
%TODO: The below part of the caption is no loger correct.  Perhaps this is not the best circuit for this example?
%Notice, that re-use of ancillas is possible as a result of the eager cleanup that was applied: the first $2$ of the $3$ ancillas used in the computation of $f$ are freed up after the first stage and are reused by $g$.
%This leads to an overall qubit count of $9$.
}
\end{figure}

\nix{

}

\begin{algorithm}[hbt]
	\caption{{\textsc{EAGER}} Performs eager clean-up of an MDD.}
	\begin{algorithmic}[1]
		\Require{An MDD $G$ in reverse topological order,  subroutines {LastDependentNode}, {ModificationPath}, {InputNodes}. }
		\State $i \leftarrow 0$
    \For{\textbf{each} node \textbf{in} G}
			\If{ modificationArrows node $= \varnothing$}
        \State dIndex $\leftarrow$ LastDependentNode of node in $G$
				\State path   $\leftarrow$ ModificationPath of node in $G$
        \State input  $\leftarrow$ InputNodes of path in G
        \If {None (modificationArrows input) $\geq$ dIndex}
          \State cleanUp $\leftarrow$ (Reverse path) ++ cleanNode
				\EndIf
		 \Else \State cleanUp $\leftarrow$ uncleanNode
     	\State $G$ $\leftarrow$ Insert cleanUp Into $G$ After dIndex
			\EndIf
		\EndFor\\
		\Return $G$
	\end{algorithmic}\label{alg:eager}
\end{algorithm}

It turns out that it is not always possible to do clean-up eagerly.
The basic reason for this is that the computation might result in the production of bits which are not needed in the future execution path of the circuit but which also cannot be easily cleaned up as they themselves were the result of another computation. A simple example is the following:
\begin{lstlisting}[language=FSharp]
let a = false
let mutable b = false
let c = (a&&b)
b <- b <> c
b
\end{lstlisting}

Here the variable $c$ is the result of a non-trivial computation and is then part of a subsequent computation that involves updating yet another variable $b$ in a mutable update path. The corresponding MDD is shown in Figure \ref{fig:NoEager}.

\begin{figure}[htb]
  \centering
  \begin{subfigure}[b]{0.25\textwidth}
    \includegraphics[width=\textwidth]{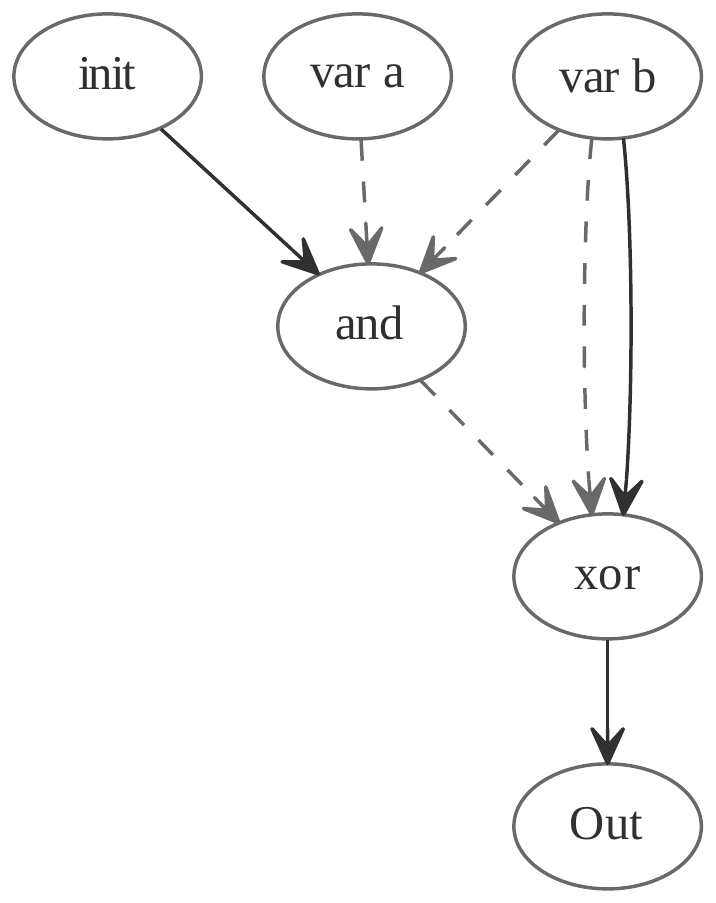}
    \caption{}
  \end{subfigure}
  \hfill
  \begin{subfigure}[b]{0.24\textwidth}
    \raisebox{1cm}{\includegraphics[width=\textwidth]{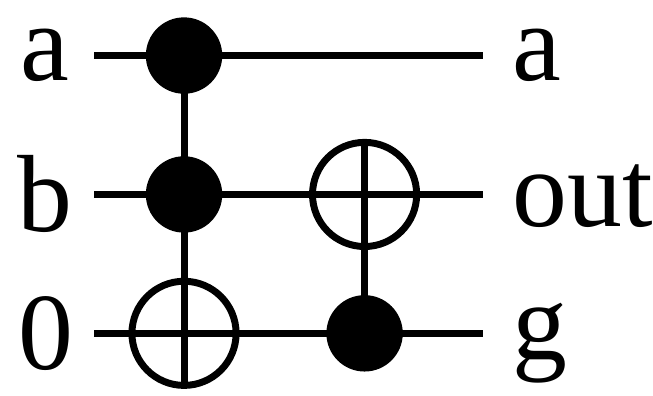}}
    \caption{Circuit for the graph in (a)}
  \end{subfigure}
  \hfill
  \begin{subfigure}[b]{0.39\textwidth}
    \raisebox{1cm}{\includegraphics[width=\textwidth]{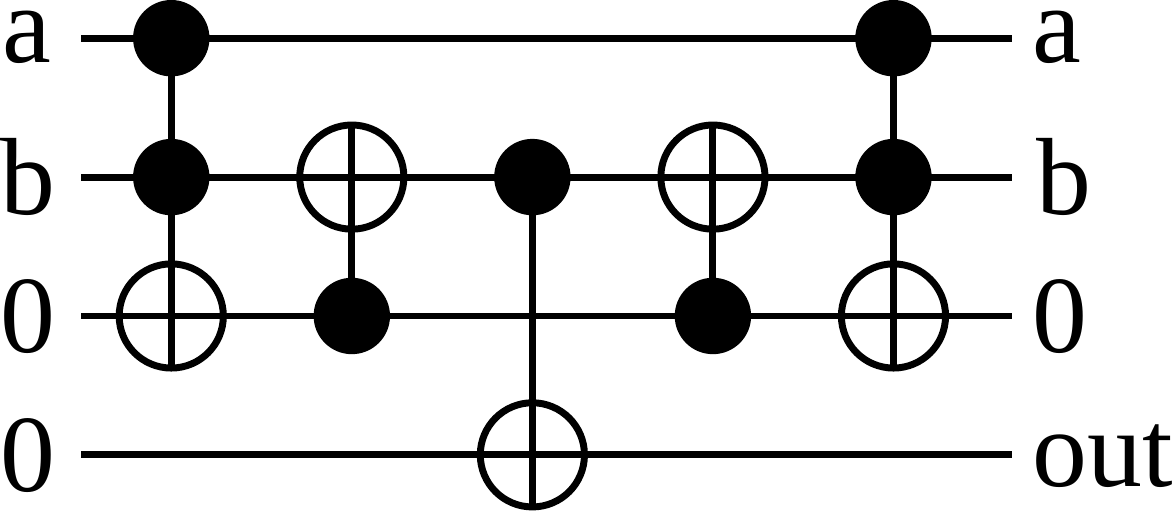}}
    \caption{Circuit after cleanup by copy\&reverse}
  \end{subfigure}

  \caption{\label{fig:NoEager}
  Example illustrating a case in which early cleanup is not possible.
  Here the last dependent node of the \texttt{and} operation is the \texttt{xor} operation but the input \texttt{var b} to the \texttt{and} operation has a modification arrow to \texttt{xor}.
  This means that eager cleanup will fail (see algorithm \ref{alg:eager}).
  }
\end{figure}

The graph shows that a value is initialized to store the result of the AND operation but cannot be cleared even though it is not longer needed in the computation.
The reason for this is that the node performing the XOR computation depends on it, indicated by the dashed arrow, and the mutable data $b$ is modified before there is an opportunity to clear it.
Situations like this can be identified in the MDD graph as finding (undirected) cycles involving at least one bold edge.
If a case like this occurs, the function will not be fully cleaned.
To resolve this we can copy out the result and reverse the function.

A pseudo-code implementation of the algorithm to perform eager cleanup is given in Algorithm \ref{alg:eager}. The algorithm takes a graph in reverse topological order and tries to find a node that doesn't modify any nodes that come after it in reverse topological order, so that this node can be safely cleaned up. As finding such a node involves checking all nodes that might influence its value along the modification path to which it belongs, which in itself might take a linear time $O(n)$ of checks, where $n$ is the input size as measured by the number of nodes in the initial MDD $G$, we obtain an overall worst-case running time of $O(n^2)$ for this algorithm.

The algorithm uses $3$ subroutines: LastDependentNode($G_i$,$G$) is defined to be the index of the last node in topologically a sorted graph $G$ which depends on node $G_i$. ModificationPath($G_i$,$G$) is defined as the path made up of the mutation arrows from initialization to $G_i$. Finally, InputNodes($p$,$G$) is defined as the set of all input nodes into a path $p$.

As discussed above, there are cases, in which no eager cleanup is possible, i.e., where the if statement in line 7 of the code does not apply. This happens if there are modification arrows greater then the index of the last dependent node in the input, i.e., the input has changed before eager cleanup can be done, i.e., eager cleanup is not possible (this is as in the example in Figure \ref{fig:NoEager}). In this case, the results must be copied out and the function reversed for a full cleanup. This is marked by the attribute ``unclean'' when the else branch in line 10 in Algorithm \ref{alg:eager} is executed. In this case, during the final circuit emission phase which involves walking the modified MDD $G$ that is returned by Algorithm \ref{alg:eager}, the unclean nodes are processed by copying out the result and cleaning up as in the Bennett strategy.

\subsubsection{Proof of correctness of the eager cleanup method}

We provide a proof of correctness of the \textsc{Eager} cleanup strategy introduced in the previous section. First, we introduce two notions of relationships that mutation paths that are part of an MDD can have, namely independence and inter-dependence. This naturally captures and generalizes the notion of conflicts shown in Figure \ref{fig:NoEager}. 

\begin{figure}[htb]
\centering
\begin{tabular}{c@{\qquad\qquad\qquad}c}
\includegraphics[height=3.5cm]{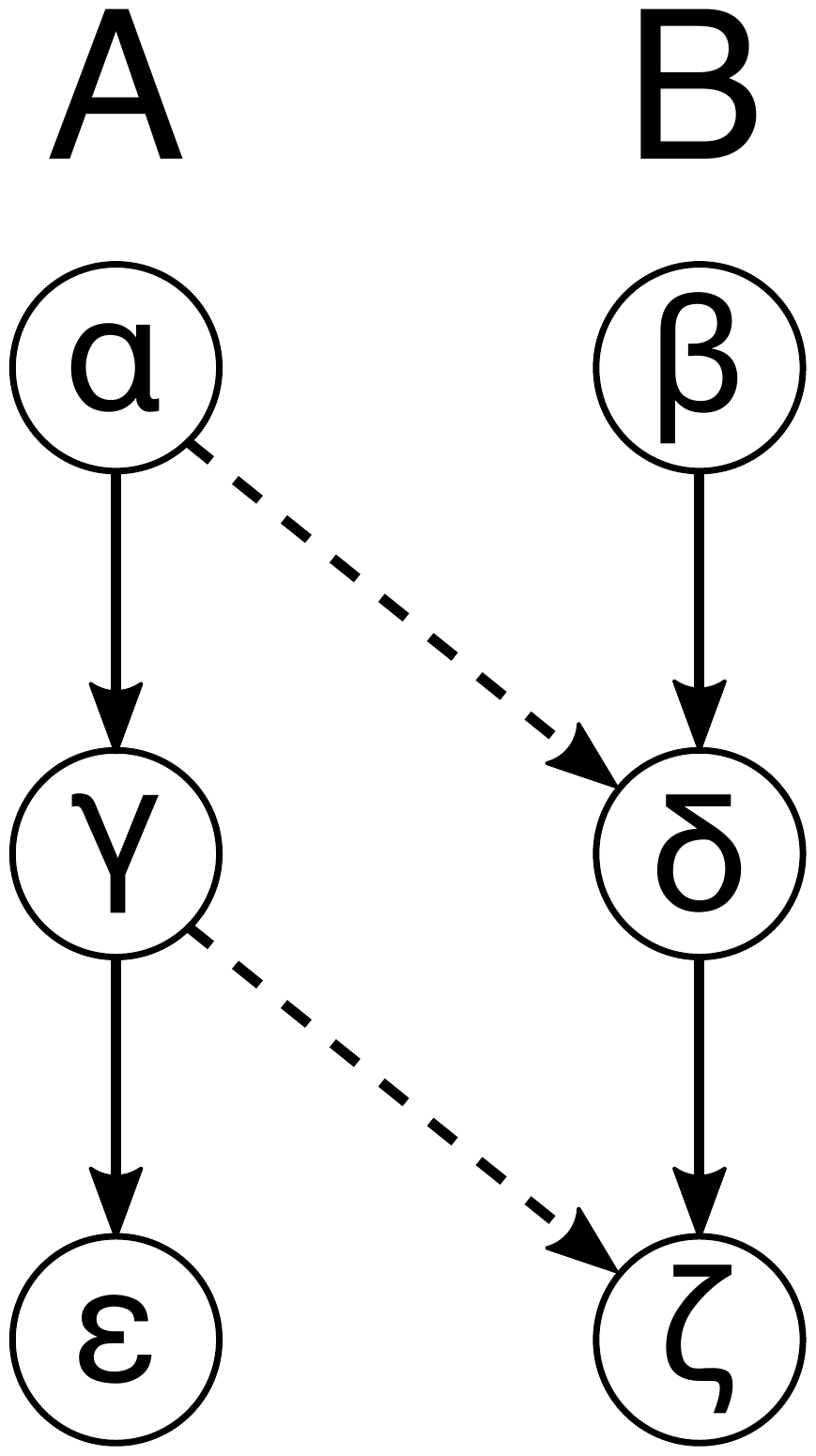} &
\includegraphics[height=3.5cm]{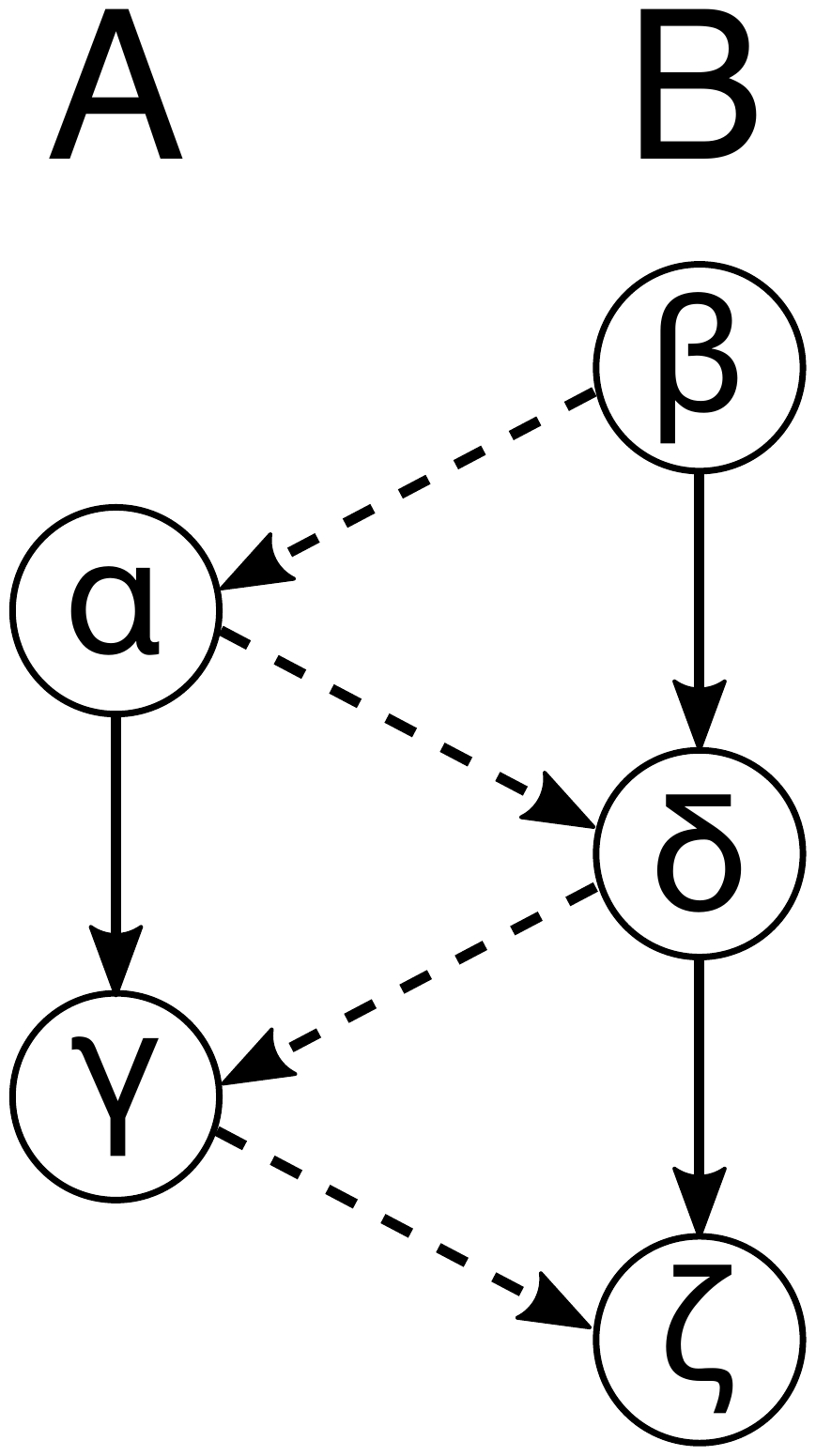} \\
(a) & (b) 
\end{tabular}
\caption{\label{fig:oneway}
Relationships between mutation paths in an MDD: (a) one-way dependent paths $A$ and $B$. Notice that dependency edges only point from one graph (here from $A$ to $B$) to the other and never backwards. (b) interdependent graphs $A$ and $B$. Notice that dependency edges point in both directions. 
}
\end{figure}

\begin{definition} Two paths $P_1$ and $P_2$ that are formed by mutation edges in an MDD $G=(V,E)$ are called {\em one-way dependent} if all edges $(v,w)\in E$ that involve one vertex from from $P_1$ and one from $P_2$ can be oriented in such a way that always the first vertex $v\in P_1$ and the second $w\in P_2$ (or vice versa). Otherwise the paths $P_1$ and $P_2$ are called {\em interdependent}.
\end{definition}

\begin{theorem}
Let $G=(V,E)$ be an MDD that is obtained from a given irreversible program ${\cal P}$. Assume that all mutation paths in $G$ are pairwise one-way dependent. Then the circuit ${\cal C}$ produced by the \textsc{Eager} cleanup method in Algorithm \ref{alg:eager} is correct in the sense that it computes the same function as the original program ${\cal P}$ for some assignment of the input values, ${\cal C}$ is reversible, and all ancillas that may be used by ${\cal C}$ are returned in their initial state at the end of the computation. 
\end{theorem}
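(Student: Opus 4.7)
The plan is to establish the three claims—functional equivalence with ${\cal P}$, reversibility of ${\cal C}$, and the return-to-zero property for ancillas—jointly, by induction on the steps of Algorithm~\ref{alg:eager}. Reversibility is essentially immediate: the circuit emitted for each MDD node is built from NOT, CNOT, and Toffoli gates (all of which are self-inverse), and the cleanup blocks inserted by the algorithm are prescribed as the reverse of an already-emitted modification path. Hence ${\cal C}$ is a composition of elementary reversible gates and therefore a reversible operation on the full register.

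For functional equivalence and ancilla cleanup, I would fix a bit-labeling function assigning to each live MDD node the physical bit(s) currently carrying its logical value, and maintain the invariant: after executing the prefix of ${\cal C}$ that realizes the first $j$ MDD operations, every live MDD node's bit holds exactly the value the node carries in the corresponding state of ${\cal P}$. For input nodes this is trivial; for each newly emitted operation node the invariant follows from the correctness of the Boolean-expression lowering described in Section~\ref{sec:boolean}, which also guarantees that any fresh ancilla used internally is returned to $0$ before control returns to the top-level compiler.

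The heart of the argument is a lemma that uses the one-way-dependence hypothesis to certify cleanup. Let node $A$ be identified as cleanable at position dIndex, with modification path $P_A$ and input set $I(P_A)$. I would show that at time dIndex every bit in $I(P_A)$ still carries the value it had when the forward emission of $P_A$ completed. Suppose otherwise: then some input of $P_A$ has been mutated in the interim by another path $P'$, which forces a dependency edge from $P'$ into $P_A$; but $P_A$ was emitted earlier and its mutations may be observed by $P'$, yielding a dependency edge from $P_A$ into $P'$. Two opposing edges between these two paths contradict one-way dependence. Therefore the reverse of $P_A$ inserted at time dIndex$+1$ returns every ancilla along $P_A$ to $0$, while the downstream nodes that depend on $A$ have already consumed its value, since dIndex is chosen as the \emph{last} such dependent. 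The induction then proceeds by removing $A$ and its ancillas from the active graph; since one-way dependence is inherited by sub-collections of paths, the hypothesis applies at the next iteration.

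The main obstacle I anticipate is making the operational semantics sufficiently precise that the invariant is actually provable. The MDD is an abstract dataflow model with mutable nodes, whereas ${\cal C}$ is a gate-level circuit; a clean formulation requires a simulation relation between the two. In particular, one must check that inserting the cleanup block after dIndex does not interfere with operations later in ${\cal C}$ that act on bits in $I(P_A)$—this commutation is exactly what the one-way dependence hypothesis supplies. Once the semantics is in place, the induction step should be routine, and the final state of ${\cal C}$ is obtained by emptying the active graph of all non-output nodes, leaving the output bits with their ${\cal P}$-values and every other bit restored to $0$.
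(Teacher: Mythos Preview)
Your overall architecture---invariant on live nodes, reversibility by gate composition, clean-up lemma---is reasonable, but the key lemma has a real gap, and it is precisely the place where the paper's proof diverges from yours.

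In your contradiction argument you assume that if an input of $P_A$ lying on another path $P'$ has been mutated between the forward emission of $P_A$ and time \texttt{dIndex}, then there must be a dependency edge from $P_A$ back into $P'$. That implication does not hold. The edge $P'\to P_A$ is there simply because $P_A$ reads a node of $P'$; but the subsequent mutation on $P'$ need not consume anything produced by $P_A$. Concretely: take $c \leftarrow a\wedge b$ on a fresh ancilla path $P_c$, then $b\leftarrow b\oplus a$ on $P_b$, then $d\leftarrow c\wedge e$ on a fresh path $P_d$. The only inter-path edges are $P_b\to P_c$ and $P_c\to P_d$, so all pairs are one-way dependent; yet the input $b$ to $P_c$ has been overwritten before \texttt{dIndex} (the computation of $d$). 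Your lemma as stated is therefore false, and with it the induction step that ``the reverse of $P_A$ inserted at time \texttt{dIndex}$+1$ returns every ancilla along $P_A$ to $0$.''

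The paper's proof sidesteps this by a different induction: not on the steps of Algorithm~\ref{alg:eager}, but on the number $k$ of mutation paths, and with a strictly stronger inductive hypothesis---that along each of $P_1,\dots,P_k$ one can move the current state to \emph{any} intermediate node of the path (``sliding''), not merely that the currently held value is the one that was originally read. One then orders the paths topologically using one-way dependence, so that $P_{k+1}$ has only incoming dependency edges; to step $P_{k+1}$ backward one recursively slides the earlier paths to make each needed control value available. What one-way dependence buys is exactly this topological ordering and the guarantee that supplying a value on $P_i$ for $i\le k$ never requires first moving $P_{k+1}$. Your argument never invokes this ordering, which is why the contradiction you want does not materialise. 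If you want to keep your step-by-step induction, you will need to import the sliding idea (or an equivalent recomputation argument) rather than assert that the inputs are still physically present at cleanup time.
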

\begin{proof} 
It is clear that the circuit computed by ${\cal C}$ is reversible as it uses only reversible gates in its composition. Furthermore, by construction, it computes the given function on some of the output wires. 
It remains to show that all ancilla used during the computation, and inputs which have mutation paths in the MDD G, can be cleaned up, i.e., can be returned to the
initial state. 

Our proof is by induction on the number $k$ of mutation paths $P_1, \ldots, P_k$ inside the MDD $G$. The base case is trivial as if $k=1$ there is only one mutation path and clearly this path either leads to an output, in which case no cleanup is necessary, or leads to a node that has to be cleaned up. However, since all inputs to the path are still available by assumption, the output of the path, i.e., its last node, can simply be moved backward step by step, uncomputing each intermediate result, until the initial state is recovered. 

We strengthen out inductive hypothesis slightly by assuming that we can return the final state of each path $P_1, \ldots, P_k$ to an arbitrary intermediate state. Intuitively, we think of the state ``sliding up and down the mutation path''. Now, we make the inductive step to $k+1$ paths. Recall that a node $v$ can be reversed if all their dependency edges pointing to $v$ are available nodes, i.e., nodes which we can compute from the available data, by sliding the final output either up or down the mutation path. If all dependency arrows of a path point to available nodes, that segment of the path can be reversed. We hence can make the inductive step as follows: Assume that the paths $P_1, \ldots, P_{k+1}$ are arranged in topological order. Let the $v$ be the node holding the current state of the last path $P_{k+1}$, i.e., by assumption on the one-wayness of the graph, all edges point into $P_{k+1}$ and none points backward. Now, consider all nodes in $P_1, \ldots, P_k$ that would have to be available in order to move $v$ one step backward. By induction, we can slide the states for each path into the location that is needed to make this information available for $v$ to move backward one step. Repeating this argument, we see that we can move the state anywhere along the last path $P_{k+1}$ thereby showing that cleanup is possible and algorithm \textsc{Eager} will find this cleanup strategy.
\end{proof}

\subsubsection{Incremental Clean-up}

When the ancilla management realizes that the pool of available ancillas is running low, the compiler looks ahead in the graph to check which ones of the currently allocated bits are needed for the future computation.
We then reverse the circuit creating a checkpoint by copying out the required bits.
If at a later stage the ancilla management again runs out of bits, the process is repeated by taking the last checkpoint to be the beginning of the circuit.
In order to clean up the checkpoints the result of the function may be copied out after which the function can be reversed. Pseudo-code for the incremental clean-up is shown in Figure \ref{alg:increment}.

\begin{algorithm}[hbt]
	\caption{{\textsc{INCREM}} Performs incremental clean-up}
	\begin{algorithmic}[1]
		{\Require A dependency graph $G$ in topological order,
				 $N_q$ number of unused qubits remaining,
		         $n$ index of the current node.
		         $c_p$ previous checkpoint index, $0$ if no previous exists.}
		\State $c_{\text{nodes}} \leftarrow$ $\{v\in$ $G[c_p..n]$ that have edge to a node $>$ $n$)$\}$
		\If{Length($c_{\text{nodes}}$) $\geq$ $N_q$}
			\State $c_{\text{anc}} \leftarrow$ AllocateBits($\text{length}\;c_n$)
			%\Comment{Create a checkpoint by copying bits}
			\State $G \leftarrow$ Add Copy from $c_{\text{nodes}}$ to $c_{\text{anc}}$ into $G$ after $n$
			%\Comment{Cleanup everything accept current and past checkpoints.}
			\State $G \leftarrow$ Insert (Reverse $G[c_p..n]$) into $G$ after $(n + \text{length}\;c_{\text{nodes}})$
            \State $c_p \leftarrow n + \text{length}\;c_{\text{nodes}}$
		\EndIf\\
	\Return $G$
	\end{algorithmic}\label{alg:increment}
\end{algorithm}

\subsubsection{Testing the correctness of the compiled Toffoli networks}
We tested all circuits produced by~\REVS{} for correctness and briefly describe our methodology. While general quantum circuits cannot be simulated efficiently on a classical computer, it is possible to simulate reversible circuits that are composed of Toffoli (and CNOT and NOT) gates only. The reason is that all gates that are applied to a given input vector $v_{input} = (x_1, \ldots, x_n)\in \{0,1\}^n$ can be tracked by making local updates, eventually leading to an output vector $v_{output} \in \{0,1\}^n$. In our F$\#$ implementation of~\REVS{} we use this observation to provide the following extremely simple simulator: 

\begin{lstlisting}[language=FSharp]
type Primitive = 
    | RTOFF of int * int * int
    | RCNOT of int * int
    | RNOT of int
    
let simCircuit (gates : Primitive list) (numberOfBits : int) (input : bool list) = 
    let bits = Array.init numberOfBits (fun _ -> false)
    List.iteri (fun i elm -> bits.[i] <- elm) input
    let applyGate gate = 
        match gate with
        | RCNOT(a, b) -> bits.[b] <- bits.[b] <> bits.[a]
        | RTOFF(a, b, c) -> bits.[c] <- bits.[c] <> (bits.[a] && bits.[b])
        | RNOT a -> bits.[a] <- not bits.[a]
    List.iter applyGate gates
    bits
\end{lstlisting}

Our methodology for testing is the following: for a given F$\#$ program ${\cal P}$ which maps a given in put $x \in \{0,1\}^n$ to an output $x \mapsto {\cal P}(x) \in \{0,1\}^m$, we first generate a set $S=\{u_1, \ldots, u_k\} \subseteq \{0,1\}^n$ of test inputs, where typically $k$ is of the order of a few tens or hundred of inputs. We now use the F$\#$ compiler to compile the given program into an executable and run the test inputs in $S$, producing the input/output pairs $(u_i,{\cal P}(u_i))$ for $i=1,\ldots k$. Then we use~\REVS{} to generate the Toffoli network ${\cal T}$ that supposedly implements $(x,y\oplus {\cal P}(x),0)$ and use the above simulator to simulate the output of the Toffoli network on input $(a,b,c) := {\cal T}(u_i,0,0)$. If everything is correct, the outputs coincide, i.e., $a=u_i$, $b={\cal P}(u_i)$, and $c=0$, where $0$ denotes an all-zero vector of the appropriate size.

%%%%%%%%%%%%%%%%%%%%%%%%%%%%%%%%%%%%%%%%%%%%%%%%%%%%%%%%%%%%%%%%%%%%
%
% Section: Experiments
%
%%%%%%%%%%%%%%%%%%%%%%%%%%%%%%%%%%%%%%%%%%%%%%%%%%%%%%%%%%%%%%%%%%%%

\section{Experimental data}\label{sec:data}

\subsection{Arithmetic functions}

A fundamental arithmetic operation that is needed, e.g., as a basic building block to implement the operations required for Shor's algorithm for integer factorization \cite{Shor:97} is integer addition.
There are optimized implementations of adders known for various design criteria, including overall circuit depth \cite{DKR+06} and overall number of ancillas \cite{CDKM:2004}.
Here we compare two families of adders for which hand-optimized reversible implementations are known to the output of the~\REVS{} compiler when applied to a classical (non-reversible) implementation of an adder and while using different cleanup strategies.
Specifically, we choose a space-optimized carry ripple circuit \cite{BCD+:96} that implements $(a,b,0) \mapsto (a,b,a+b)$ whose total number of Toffoli gates for $n$-bit addition modulo $2^n$ scales as $4n-2$ and that requires $3n+1$ qubits.

\begin{figure*}
\begin{lstlisting}[language=FSharp]
let carryRippleAdder (a:bool []) (b:bool []) =
   let n = Array.length a
   let result =  Array.zeroCreate (n)
   result.[0] <- a.[0] <> b.[0]
   let mutable carry = a.[0] && b.[0]
   result.[1] <- a.[1] <> b.[1] <> carry
   for i in 2 .. n - 1 do
      // compute outgoing carry from current bits and incoming carry
      carry <- (a.[i-1] && (carry <> b.[i-1])) <> (carry && b.[i-1])
      result.[i]  <-  a.[i] <> b.[i] <> carry
   result
\end{lstlisting}
\caption{\label{fig:rippleCode} F$\#$ program that implements a carry ripple adder using a simple for loop while maintaining a running carry.}
\end{figure*}

We compare the resulting hand-optimized circuits for implementing carry-ripple adders reversibly to the output generated by the~\REVS{} compiler. To this end, we first implemented a simple carry ripple adder in F$\#$ as shown in Figure \ref{fig:rippleCode}. Note that this is a regular F$\#$ program that can be compiled, e.g., into an executable and run on a conventional computer. Adding reflections allows the~\REVS{} compiler to use the same piece of code and generate an AST, generate the corresponding MDD, apply a given cleanup strategy and emit a corresponding Toffoli network. We applied this for 2 cleanup strategies that are currently implemented, namely the straightforward Bennett strategy that is oblivious to the dependency structure of the program and the eager cleanup strategy that uses the dependency information present in the MDD and tries to cleanup as soon as a variable is no longer needed.

The results of the comparison are summarized in Table \ref{tab:adderSize} and visualized in Figure \ref{fig:ripple}.
The main finding is that the output produced by~\REVS{} is within a constant of the output of the hand-optimized function, both, for the overall circuit size and the overall number of qubits.
Moreover,  it turns out that applying the Bennett strategy leads to sub-optimal scaling in terms of the total number of qubits, whereas the number of gates turns out to be the same for all three kinds of adders, hand-optimized, and the two cleanup strategies investigates.
The classical implementation of the adder is the F$\#$ program shown in Figure \ref{fig:rippleCode} which is then automatically compiled into a Toffoli network using~\REVS{} using a flag for either the Bennett or the eager cleanup strategy.

\begin{figure*}[htb]
  \centering
  \begin{subfigure}[b]{0.45\textwidth}
    \includegraphics[width=\textwidth]{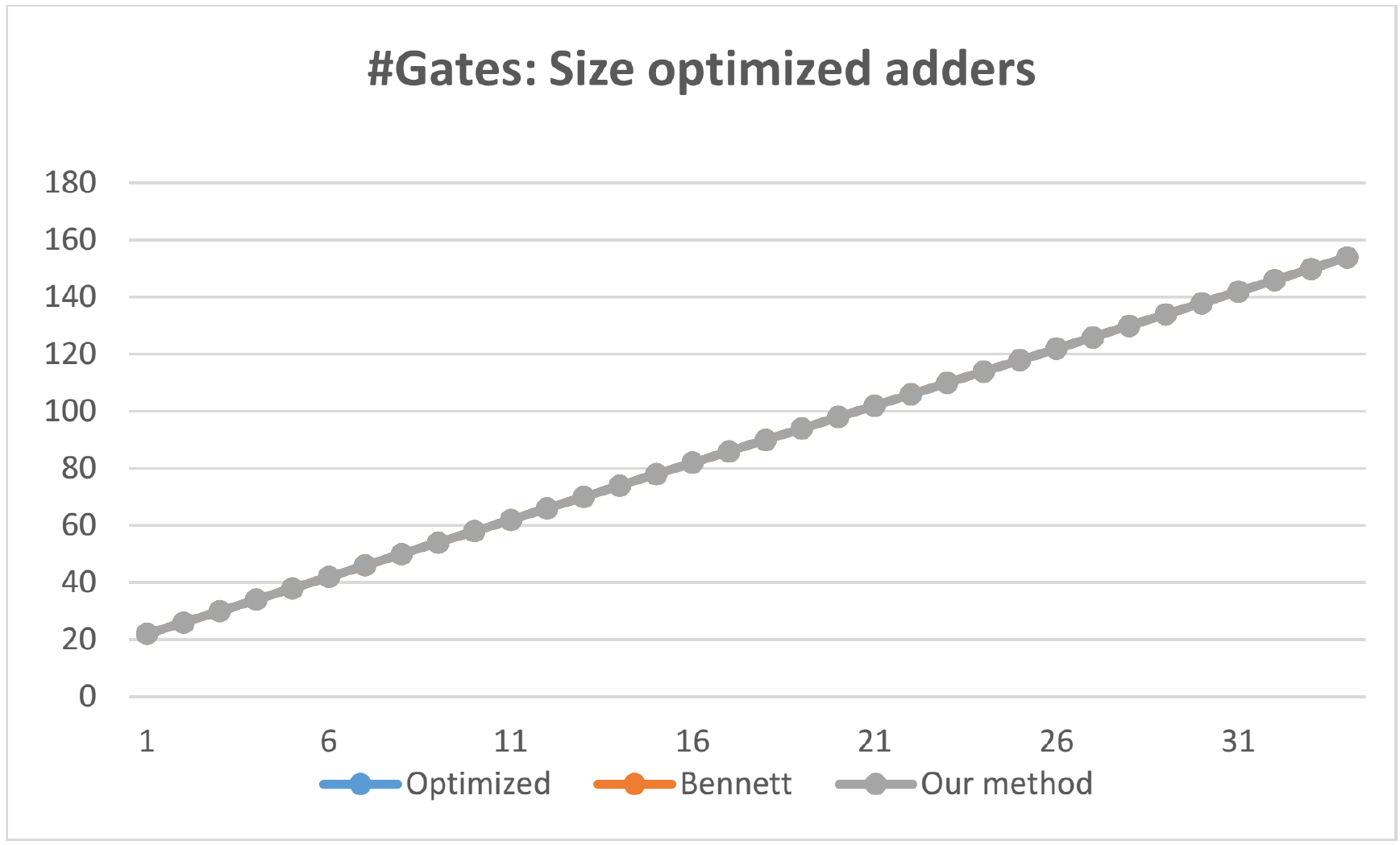}
    \caption{}
  \end{subfigure}
  \hfill
  \begin{subfigure}[b]{0.45\textwidth}
    \includegraphics[width=\textwidth]{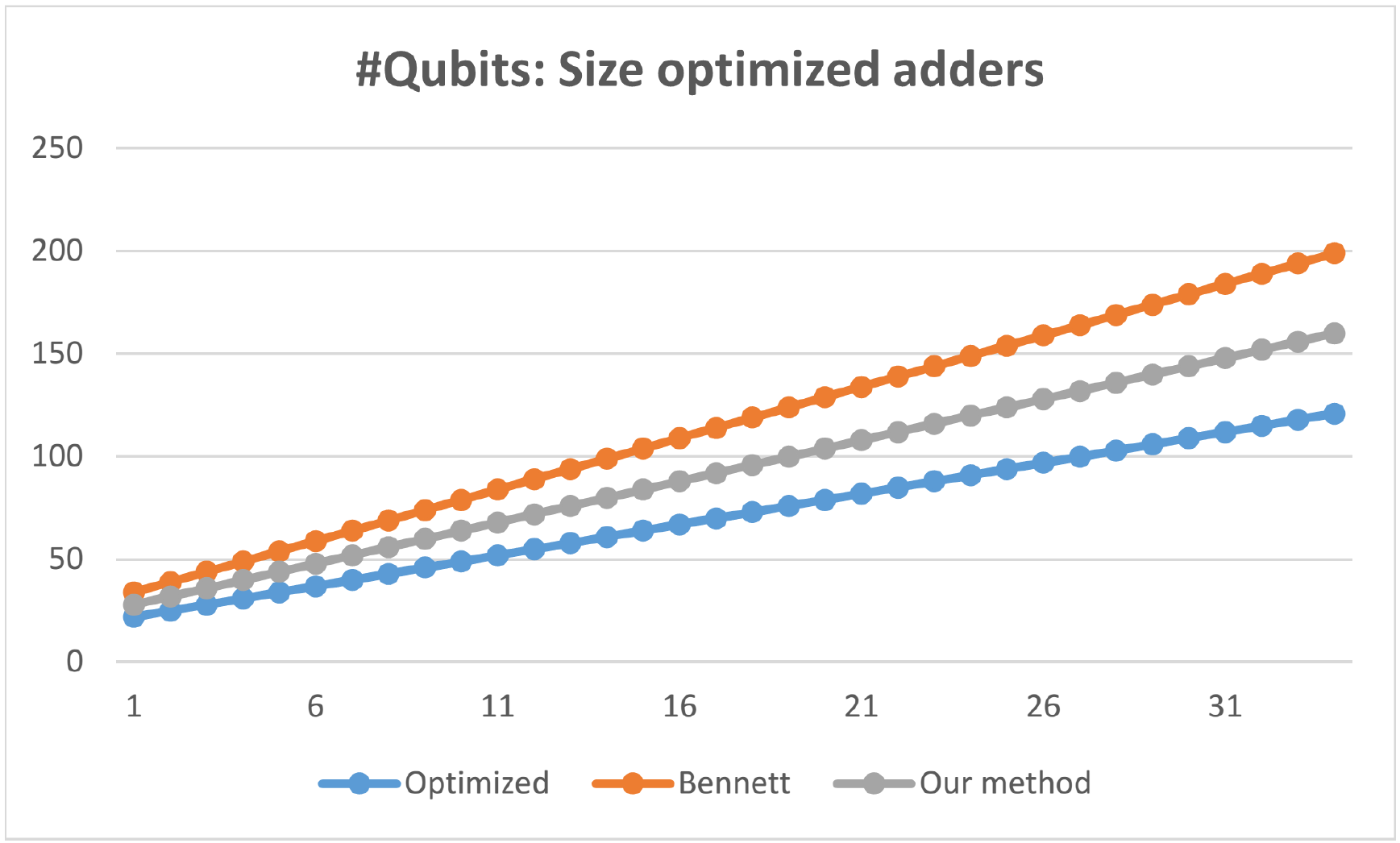}
    \caption{}
  \end{subfigure}
\caption{\label{fig:ripple} Comparison of the resources required to implement carry ripple adders of $n$-bit integers as a reversible circuit. The adders are out-of-place adders, i.e., they map $(a,b,0) \mapsto (a,b, a+b)$, where the addition is performed in the integers modulo $2^n$, i.e., the final carry is ignored. Shown in (a) is the total size of the circuit, as measured by the total number of Toffoli gates needed in the implementation. Shown in (b) is the total number of qubits, including the two input registers, the output register, and the used ancillas.} \end{figure*}

We also choose a depth-optimized adder \cite{DKR+06} that implements integer $n$-bit addition modulo $2^n$ in \[5n-w(n-1)-3\lfloor \lg(n-1) \rfloor -6\] Toffoli gates, where $w(k)$ denotes the Hamming weight of an integer $k$. The number of qubit required for this adder scales as \[4n-w(n-1)-\lfloor \lg(n-1) \rfloor -1.\] The results of the comparison are summarized in Table \ref{tab:adderSize} and visualized in Figure \ref{fig:select}.

\begin{table*}
  \centering
  \begin{tabular}{rrrrrrrrr}
    \toprule
  & \multicolumn{2}{c}{Hand optimized} & \multicolumn{3}{c} {Bennett cleanup}  &  \multicolumn{3}{c} {Eager cleanup} \\
    \cmidrule(r){2-3} \cmidrule(r){4-6} \cmidrule(r){7-9}
    $n$ & $\#$gates & $\#$qubits & $\#$gates & $\#$qubits & time &
    $\#$gates & $\#$qubits & time\\%[1ex]
    \midrule
    10 & 34    & 31       & 34    & 49       & 0.0252 & 34    & 40       & 0.0952  \\
    15 & 54    & 46       & 54    & 74       & 0.0252 & 54    & 60       & 0.0272  \\
    20 & 74    & 61       & 74    & 99       & 0.0342 & 74    & 80       & 0.0382  \\
    25 & 94    & 76       & 94    & 124      & 0.0422 & 94    & 100      & 0.0482  \\
    30 & 114   & 91       & 114   & 149      & 0.0492 & 114   & 120      & 0.0622  \\
    35 & 134   & 106      & 134   & 174      & 0.0592 & 134   & 140      & 0.0722  \\
    40 & 154   & 121      & 154   & 199      & 0.0672 & 154   & 160      & 0.0852  \\
    \bottomrule
  \end{tabular}
  \caption{\label{tab:adderSize} Comparison of different compilation strategies for $n$-bit adders. The optimization criterion is overall circuit size. Shown are the results for a hand-optimized carry ripple adder, an adder that results from applying the~\REVS{} compiler with a cleanup strategy corresponding to Bennett's method, and an adder that results from applying the~\REVS{} compiler with the eager cleanup strategy. Overserve that while the total number of gates is the same for all three methods, the eager cleanup method comes within a space overhead of roughly $33\%$ over the hand optimized adder which is better than the overhead of roughly $66 \%$ for Bennett's method over the hand optimized adder.}
\end{table*}

\begin{figure*}[htb]
\centering
\begin{tabular}{cc}
\!\!\!\includegraphics[width=0.45\columnwidth]{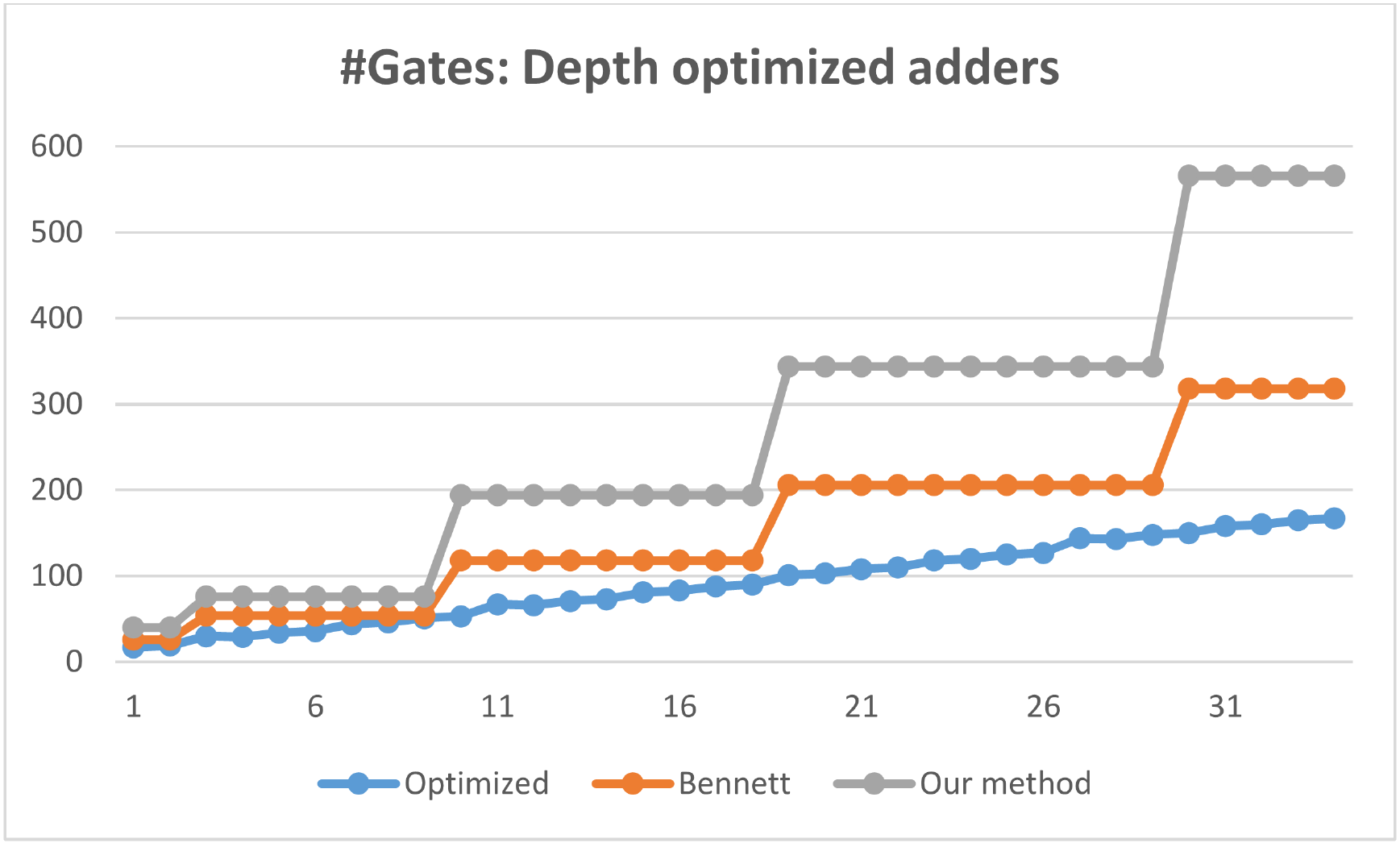} &
\includegraphics[width=0.45\columnwidth]{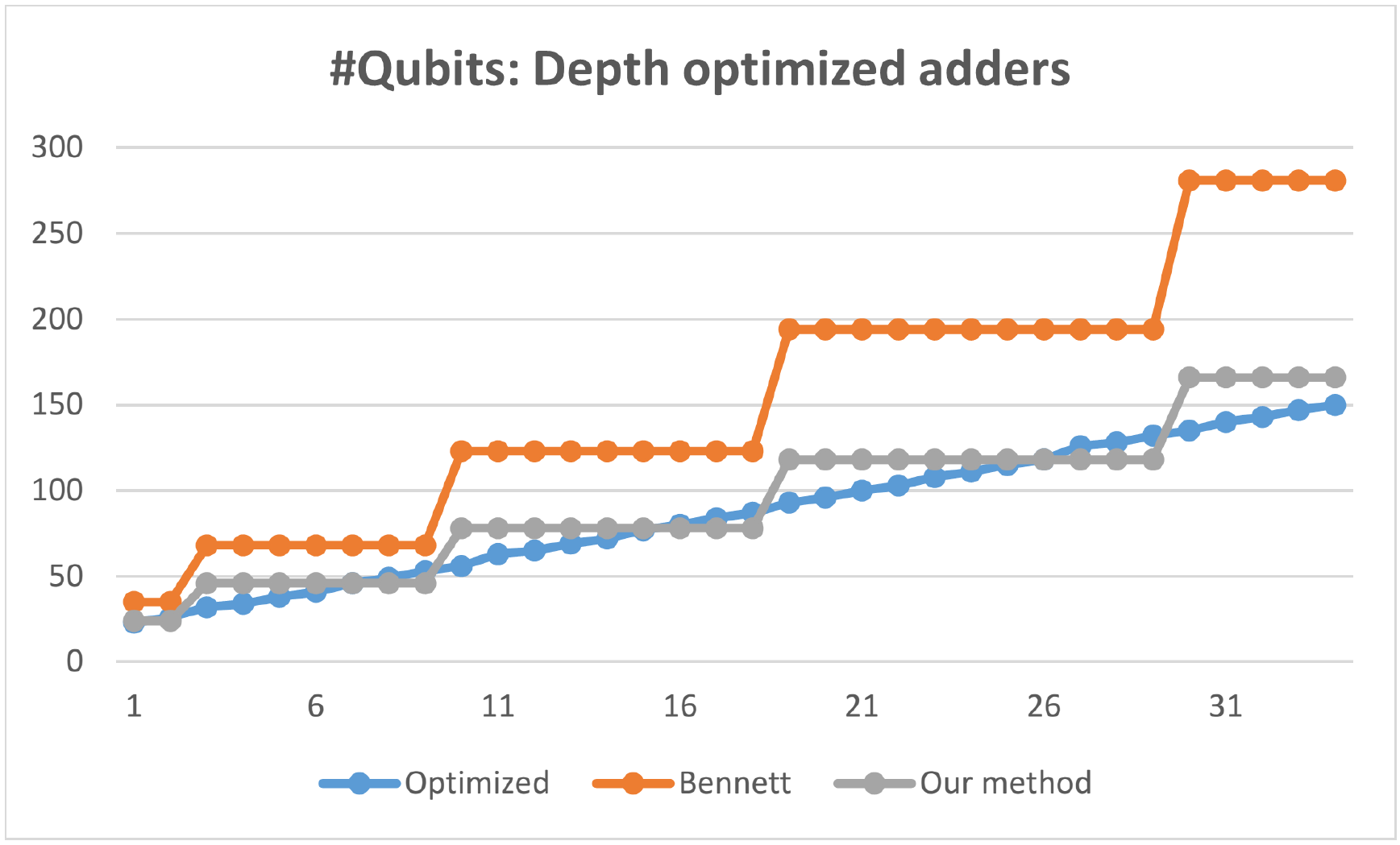}\\
(a) & (b)
\end{tabular}
\caption{\label{fig:select} Comparison of the resources required to implement depth-optimized adders of $n$-bit integers as a reversible circuit. The adders are out-of-place adders, i.e., they map $(a,b,0) \mapsto (a,b, a+b)$, where the addition is performed in the integers modulo $2^n$, i.e., the final carry is ignored. Shown in (a) is the total size of the circuit, as measured by the total number of Toffoli gates needed in the implementation. Shown in (b) is the total number of qubits, including the two input registers, the output register, and the used ancillas.} \end{figure*}

\subsection{Hash functions}

We applied~\REVS{} also to the problem of compiling classical, irreversible implementations of hash functions into Toffoli networks. As an example, we  studied an irreversible implementation of SHA-256 which is a member of the SHA-2 family that hashes a bitstring of arbitrary length to a bitstring of length $256$. Following is a snippet of code from the main loop of the hash function SHA-2. We give more detail about this function in Appendix \ref{app:hash}

\begin{lstlisting}[language=FSharp]
 for i in 0 .. n - 1 do
   // Inplace update of 32 bit registers
   h  <- addMod2_32 (ch e f g)
   h  <- addMod2_32 (s0 a)
   h  <- addMod2_32 w
   h  <- addMod2_32 k
   d  <- addMod2_32 h
   h  <- addMod2_32 (ma a b c)
   h  <- addMod2_32 (s1 e)
   let t = h
   //Reassignment for next iteration
   h <- g; g <- f; f <- e; e <- d
   d <- c; c <- b; b <- a; a <- t
\end{lstlisting}

\begin{figure}[htb]
\centering
\includegraphics[width=0.7\textwidth]{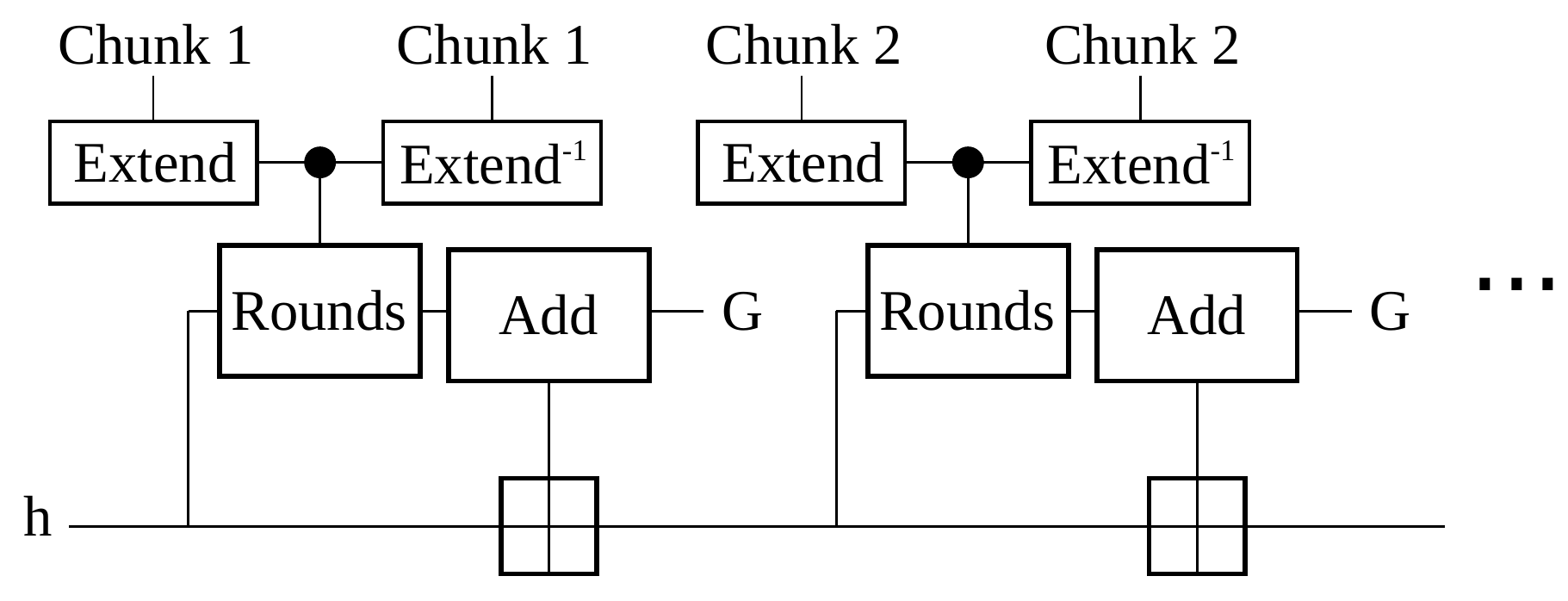}
\caption{\label{fig:sharounds} Data flow diagram corresponding to the SHA-2 cipher.
Note that the cipher has an internal state which gets passed from one round to the next round, leading to garbage qubits, i.e., ancillas that when implemented with a lazy clean-up strategy will accumulate and will lead to a large space overhead.}
\end{figure}

\begin{figure}[htb]
\centering
\includegraphics[width=0.9\columnwidth]{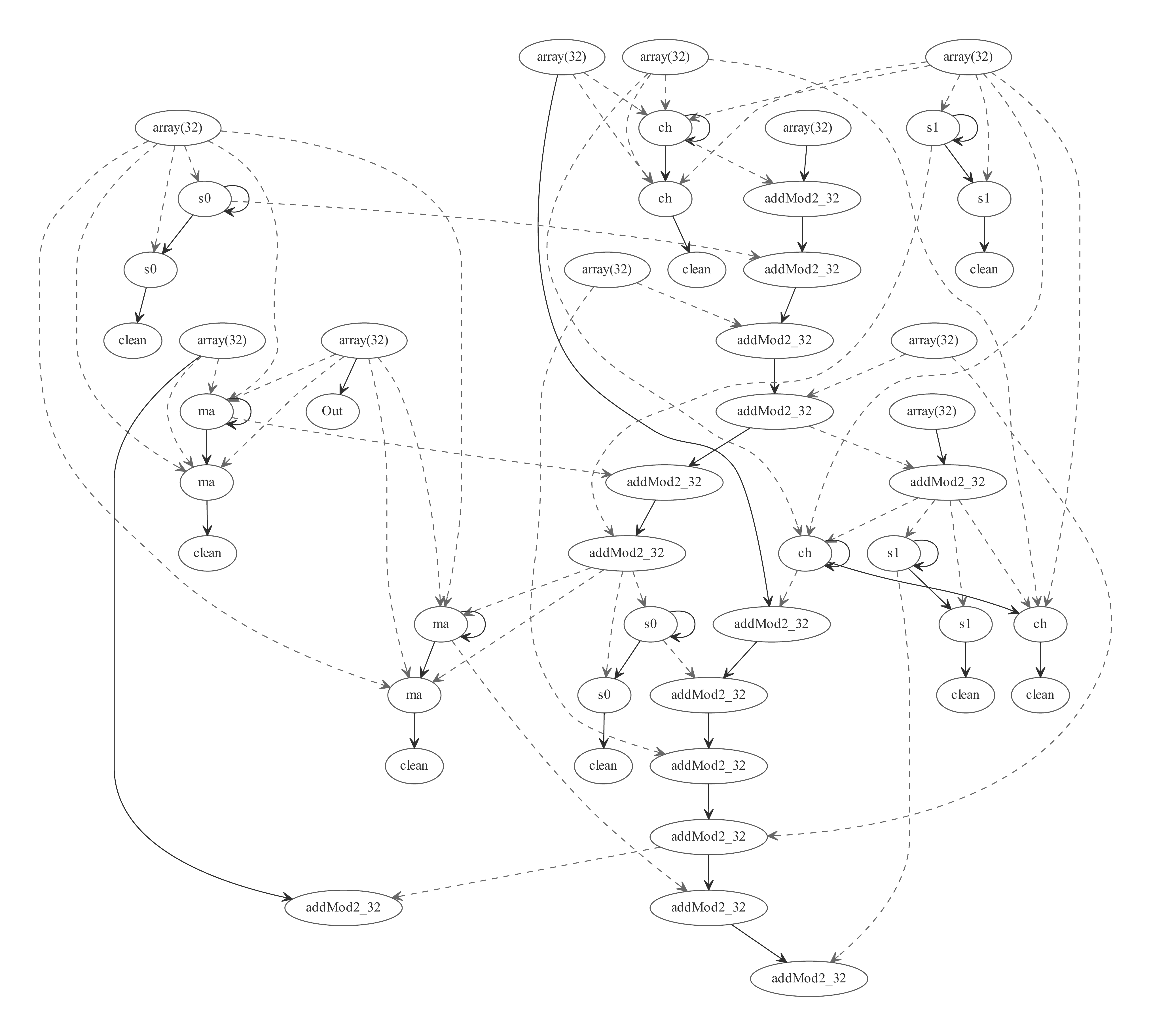}
\caption{\label{fig:mddSha}
MDD for 2 rounds of SHA-2.
}
\end{figure}

In Figure \ref{fig:sharounds} we show how the data flow in the SHA-2 cipher can be visualized. A key feature that allows to keep the amount of space small is that the cipher has an internal state that gets modified within each round by a permutation and then gets passed onto the next round. Based on this we were able to hand-optimize a circuit for SHA-2, see Fig.~\ref{fig:shahand}, however, it turns out that the~REVS{} compiler is able to find a decomposition automatically that comes very close to the hand optimized program. In Table~\ref{tab:SHAcomp} we compare our method with the Bennett method by compiling rounds of the SHA-2 hashing algorithm. 

\begin{table}
  \centering
  \begin{tabular}{rrrrrrrrr}
  \toprule
  & \multicolumn{3}{c} {Bennett cleanup}    &  \multicolumn{3}{c} {Eager cleanup} & \multicolumn{2}{c}{Hand Optimized}  \\
  \cmidrule(r){2-4} \cmidrule(r){5-7} \cmidrule(r){8-9}
  Rounds & $\#$qubits & Toffoli count & time & $\#$qubits & Toffoli count & time & $\#$qubits & Toffoli count  \\
  \midrule
  1  & 704   & 1124     & 0.2546002 & 353   & 690      & 0.3290822 & 353   & 683      \\
  2  & 832   & 2248     & 0.2639522 & 353   & 1380     & 0.3360352 & 353   & 1366     \\
  3  & 960   & 3372     & 0.2823012 & 353   & 2070     & 0.3420732 & 353   & 2049     \\
  4  & 1088  & 4496     & 0.2827132 & 353   & 2760     & 0.3543582 & 353   & 2732     \\
  5  & 1216  & 5620     & 0.2907102 & 353   & 3450     & 0.3664272 & 353   & 3415     \\
  6  & 1344  & 6744     & 0.3042492 & 353   & 4140     & 0.3784522 & 353   & 4098     \\
  7  & 1472  & 7868     & 0.3123962 & 353   & 4830     & 0.3918812 & 353   & 4781     \\
  8  & 1600  & 8992     & 0.3284542 & 353   & 5520     & 0.4025412 & 353   & 5464     \\
  9  & 1728  & 10116    & 0.3341342 & 353   & 6210     & 0.4130702 & 353   & 6147     \\
  10 & 1856  & 11240    & 0.3449002 & 353   & 6900     & 0.4304762 & 353   & 6830     \\
  \bottomrule
  \end{tabular}
  \caption{\label{tab:SHAcomp} Comparison of different compilation strategies for the cryptographic hash function SHA-2.
    Shown are the resulting circuit size, measured by the total number of Toffoli gates, the resulting total number of qubits, and the time it took to compile the circuit for various numbers of rounds.
    All timing data are measure in seconds and resulted from running the F$\#$ compiler in Visual Studio 2013 on an Intel i7-3667 @ 2GHz 8GB RAM under Windows 8.1. The table shows significant savings of almost a factor of $4$ in terms of the total numbers of qubits required to synthesize the cipher when comparing the simple Bennett cleanup strategy versus the eager cleanup strategy.
    The reason for this is that the Bennett cleanup methods allocates new space essentially for each gates versus the eager cleanup strategy that tries to clean up and reallocate space as soon as possible which for the round-based nature of the function can be done as soon as the round is completed.}
\end{table}

\begin{figure}[htb]
    \centering
    \includegraphics[width=\columnwidth]{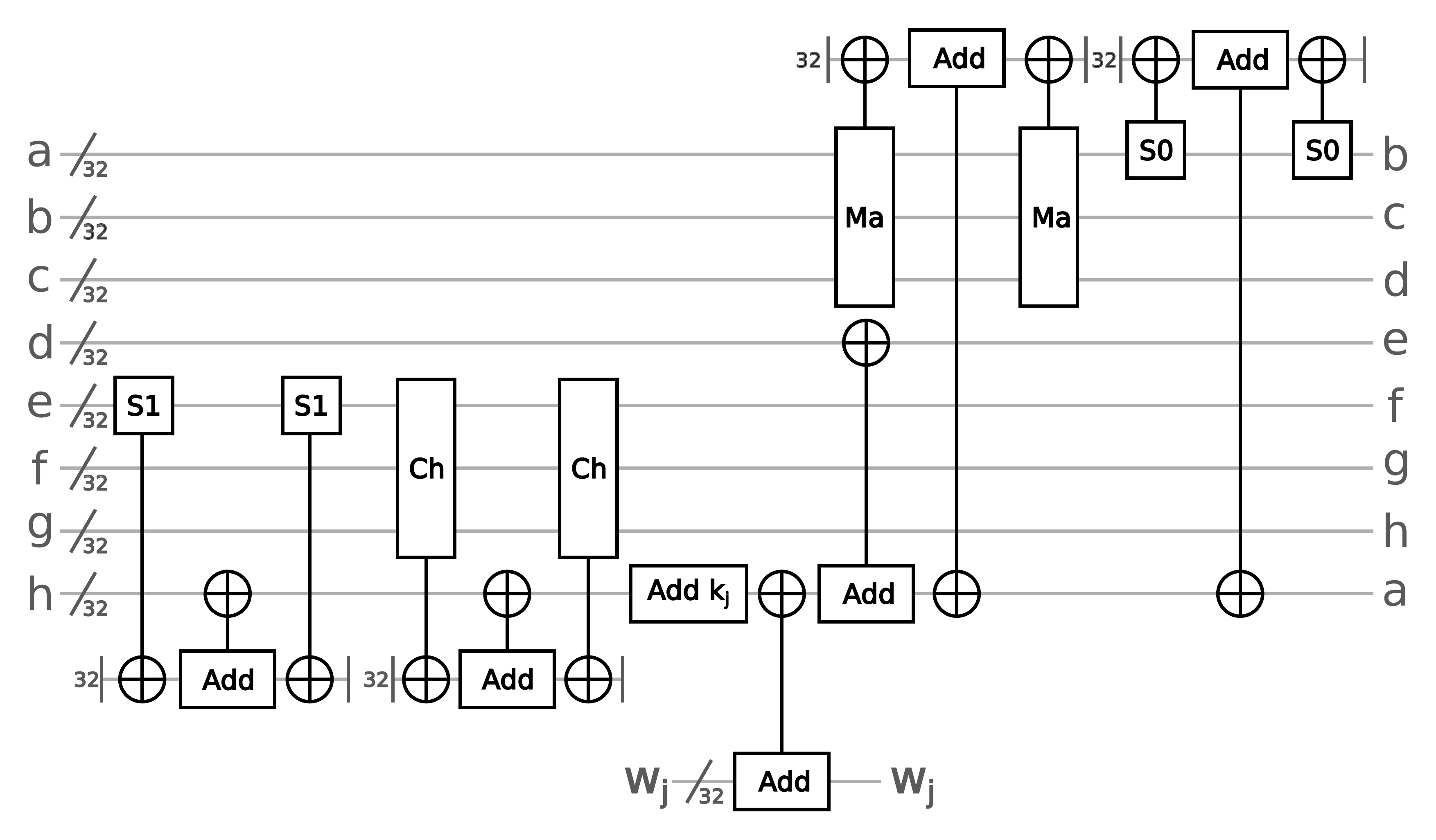}
    \caption{\label{fig:shahand}
        Description of a quantum circuit for SHA-2 that was obtained by inspection of the cipher and translation into a structured circuit.
        This circuit contains 7 adders (mod $2^{32}$).
        Using the adder from CDKM\cite{CDKM:2004} with a Toffoli cost of $2n-3$ this corresponds to 61 Toffoli gates per adder or 427 per round.
        Note that the $K_j$ adder is a constant adder so may use slightly less gates if we remove controls where the value is known.
        Each MA operation can be implemented with 2 Toffoli gates per bit for a total of 128 Toffoli gates for both operations above.
        Each CH operation can also be implemented for a Toffoli cost of 2 leading to a further 128 Toffoli gates per round.
        The S0 and S1 operations can be implemented using only Toffoli gates.
        The total per round Toffoli cost is therefore $427+128+128=683$
}
\end{figure}

\subsection{BLIF benchmarks}

The Berkeley Logic Interchange Format (BLIF) \cite{blif} is a logic level circuit description of a classical operation. The format allows the specification of hierarchical logical circuits, based on a simple text input form. Circuits can have combinational components, which typically are given by a collection of truth tables using separate lines for each input/output combinations, where ``don't cares'' are allows, and sequential components, i.e., memory elements such as latches. A BLIF file has the following structure:

\begin{lstlisting}[language=Fsharp]
.model decl-model-name
.inputs decl-input-list 
.outputs decl-output-list 
.clock decl-clock-list
<command> 
... 
<command> 
.end
\end{lstlisting}

Here .model is a string describing the name of the circuit, .inputs is a list of input bits, .outputs a list of output bits, .clock is a list of clocks to describe timing information, and each command is a specification of part of the logical description of the gate. As we are interested in combinational circuits only, i.e., circuits without memory, for us only a subset of all expressible BLIF syntax is relevant. 

A typical command statement them looks like a collection of lines, each having the form as follows: 

\begin{lstlisting}
-01-
\end{lstlisting}

Where a zero or one in the nth column means that the corresponding input must be zero or one and a dash means that any input is accepted. In the above example, the specification implies that any input in the set $\{ (0,0,1,0),(0,0,1,1),(1,0,1,0),(1,0,1,1)\}$ evaluates to $1$ and all other inputs evaluate to $0$. 

If any one statement in a block is true then the entire block is true.
For example given the inputs $x_1$, $x_2$, and $x_3$ the following block has the value
$(\lnot x_1\land x_3)\lor(\lnot x_2)\lor(x_1\land x_2 \land x_3)$:

\begin{lstlisting}
0-1
-0-
111
\end{lstlisting}

The BLIF format is attractive as there is a rich set of circuit benchmarks that have been used primarily by the Circuit and Systems community in the 80s and 90s. These benchmarks include the so-called ISCAS'85, ISCAS'89,  MCNC'91, LGSynth'91 and LGSynth'92 collections \cite{benches,MCNC91,LGSynth91}. We identified all examples from the union of these benchmarks that {\em only} use combinational circuit elements, i.e., for all those Boolean functions in principle a reversible circuit can be computed. We obtained a set of $135$ benchmark circuits which we used to test the performance of our Boolean expression generation subroutines in~\REVS. 

\subsubsection{Optimization of BLIF generation: Converting XOR to OR}
In the case of mutually exclusive statements XOR is equivalent to OR.
That is to say $a \lor b = a \oplus b$ if $a = 1 \implies b = 0$ and $b = 1 \implies a =0$.
For example $a\land b \lor \neg a \land c = a\land b \oplus \neg a \land c$.

This is useful as it allows us to avoid the use of Toffoli gates and use less ancillas. For example if we wish to compute $a \lor b \lor c$ we might use the circuit:

  \[
    \Qcircuit @C=1em @R=.7em {
        \lstick{a} & \ctrlo{1} & \qw      & \ctrlo{1} & \rstick{a}\qw\\
        \lstick{b} & \ctrlo{2} & \qw      & \ctrlo{2} & \rstick{b}\qw\\
        \lstick{c} & \qw       & \ctrlo{2}& \qw       & \rstick{c}\qw\\
        \lstick{0} & \targ     & \ctrl{1} & \targ     & \rstick{0}\qw \\
        \lstick{0} & \qw       & \targ    & \targ      & \rstick{a \lor b \lor c}\qw
    }
  \]

\medskip
And as another example the function $a \oplus b \oplus c$ can be computed as:

\[
    \Qcircuit @C=1em @R=.7em {
        \lstick{a} & \ctrl{3}  & \qw      & \qw      & \rstick{a}\qw\\
        \lstick{b} & \qw       & \ctrl{2} & \qw      & \rstick{b}\qw\\
        \lstick{c} & \qw       & \qw      & \ctrl{1} & \rstick{c}\qw\\
        \lstick{0} & \targ     & \targ    & \targ    & \rstick{a \oplus b \oplus c}\qw \\
    }
\]
\medskip

In general, given a set of AND expressions that are combined using OR we want to find sets of mutually exclusive statements that minimize the use of AND. We consider each AND expression to be a vertex on a graph and add edges between vertices that are mutually exclusive. Now we cover this graph using the smallest possible number of 
cliques using an algorithm that solves the so-called CLIQUE-COVER problem which asks to partition the vertices of a graph can be partitioned into cliques. The NP-completeness of clique cover for a given upper bound $k$ of allowed cliques is well-known \cite{GJ:79}, however, good practical approximation algorithms exist \cite{GHH+:96}. 
%The related clique edge cover problem considers sets of cliques that include all of the edges of a given graph. It is also NP-complete. 

After finding a cliques partition each set of mutually exclusive statements can be implemented by evaluating the AND statements and combining all of the values on a single ancilla using XOR for each clique. These results can then be combined using OR statements. We can preprocess the given BLIF files in such a way that the cliques will be grouped in the output. This yields a new BLIF file, however, the effect of the reordering is that instead of OR functions now the much cheaper XOR functions can be used. Using the two earlier examples given above, we can illustrate this technique by applying it to the following circuit: 

\begin{center}
  \begin{tabular}{c c}
    \begin{lstlisting}
0-1
-0-
111
    \end{lstlisting} &
    \begin{lstlisting}
-0-
0-1
111
    \end{lstlisting} \\
    Before Reordering & After Reordering
  \end{tabular}
\end{center}

After reordering, we can now replace the OR of the rows by the XOR of the row. We obtain the following circuit from compiling the reordered BLIF circuit using~\REVS:

\begin{center}
\includegraphics[width=0.9\columnwidth]{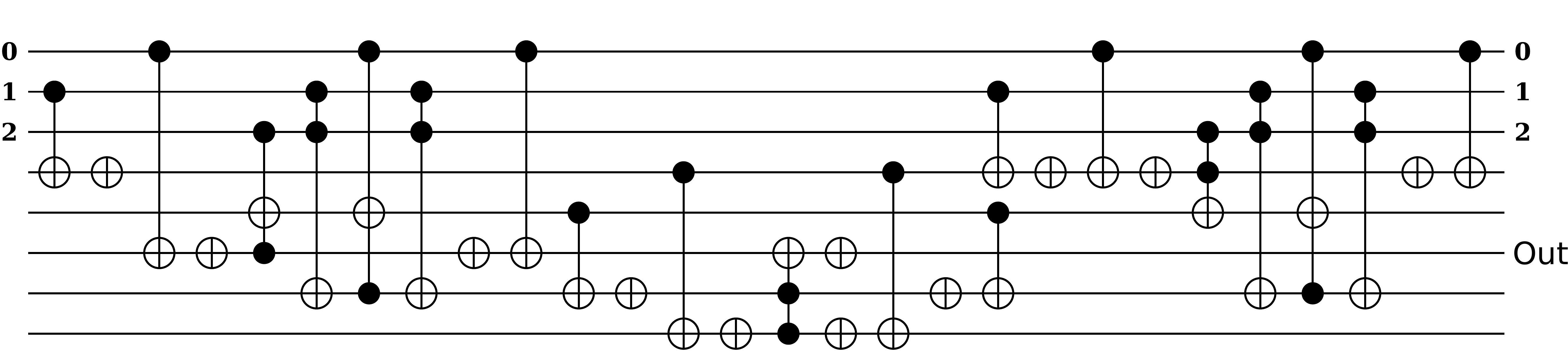}
\end{center}

We implemented the procedure that first performs the offline conversion of the given BLIF circuit to an equivalent BLIF circuit by performing the clique-cover-based XOR maximization. Then this circuit is converted directly into an MDD before cleanup and in doing so, the~\REVS{} compiler finds the optimized grouping that replaces OR terms with XOR terms. We applied this method to the above mentioned suite of benchmarks. 
Shown in Figure \ref{blifChart} the results of applying eager cleanup vs Bennett cleanup to the resulting MDD for $38$ benchmarks out of our overall suite of $135$ benchmarks. 

\begin{table}
  \centering
  \begin{tabular}{rrrrrcr}
    \toprule
    & \multicolumn{2}{c} {Eager cleanup}    &  \multicolumn{2}{c} {Bennett cleanup} \\
    \cmidrule(r){2-3} \cmidrule(r){4-5}
    Name            & Bits          &   Gates       &      Bits     &       Gates  & Qubit reduction ($\%$) & Time   \\
    \midrule
  ADDERFDS        &       76      &       3394    &       149     &       2485    &   48.99       & 0.0222 \\    
    CM150           &       42      &       1696    &       71      &       1017    &  40.85      & 0.0092 \\
    CM151           &       25      &       795     &       36      &       485     &  30.56      & 0.0032 \\
    CM163           &       36      &       843     &       66      &       825     & 45.45       & 0.0072 \\
    DES             &       992     &       85351   &       2123    &       62425   &  53.27      & 1.7202 \\
    PARITYFDS       &       33      &       290     &       33      &       301     & 0.00           & 0.0052 \\
    alu4cl         &       175     &       13487   &       441     &       8867    &  60.32       & 0.1012 \\
    apex7           &       111     &       3299    &       242     &       3077    &   54.13     & 0.0312 \\
    b9              &       127     &       4444    &       316     &       3185    &  59.81      & 0.0352 \\
    c8              &       80      &       4791    &       213     &       4149    &   62.44     & 0.0222 \\
    cht             &       89      &       2664    &       179     &       3025    &  50.28      & 0.0192 \\
    cmb             &       37      &       1042    &       60      &       707     &  38.33      & 0.0082 \\
    comp            &       97      &       2606    &       162     &       1991    &  40.12      & 0.0172 \\
    cordic          &       130     &       3094    &       262     &       2363    &  50.38      & 0.0372 \\
    count           &       71      &       1304    &       124     &       1435    &  42.74      & 0.0142 \\
    cu              &       36      &       1395    &       80      &       1045    &  55.00         & 0.0092 \\
    dalu            &       1056    &       77482   &       3444    &       49463   &  69.34      & 1.4402 \\
    decod           &       27      &       188     &       28      &       365     &  3.57       & 0.0032 \\
    f51m            &       49      &       4684    &       69      &       2427    & 28.99       & 0.0592 \\
    frg2            &       531     &       44820   &       2080    &       34117   & 74.47       & 0.4722 \\
    i10             &       1633    &       24505   &       3385    &       23489   & 51.76       & 7.5782 \\
    lal             &       86      &       4051    &       212     &       2939    &  59.43      & 0.0262 \\
    mux             &       45      &       2841    &       67      &       1473    &   32.84     & 0.0122 \\
    pair            &       666     &       37630   &       2137    &       27091   & 68.83       & 0.8632 \\
    pcler8cl       &       57      &       507     &       78      &       599      & 26.92       & 0.0062 \\
    pm1             &       41      &       1248    &       96      &       987     &  57.29      & 0.0102 \\
    rot             &       325     &       22407   &       779     &       12709   & 58.28       & 0.1732 \\
    sct             &       64      &       4509    &       202     &       2923    & 68.32       & 0.0242 \\
    t481            &       2096    &       134542  &       6774    &       95393   & 69.06       & 3.4702 \\
    tcon            &       37      &       200     &       51      &       289     & 27.45       & 0.0032 \\
    term1           &       181     &       20744   &       724     &       11881   & 75.00          & 0.1072 \\
    toolarge       &       707     &       155854  &       1191    &       155885   &  40.64      & 3.7072 \\
    top             &       2676    &       33190   &       3940    &       30981   &  32.08      & 4.7372 \\
    ttt2            &       116     &       10155   &       400     &       8511    & 71.00          & 0.0622 \\
    unreg           &       58      &       1264    &       103     &       1129    &  43.69      & 0.0112 \\
    vda             &       173     &       19234   &       899     &       19705   & 80.76       & 0.1402 \\
    x3              &       377     &       30258   &       1309    &       21467   & 71.2        & 0.2302 \\
    z4ml            &       45      &       3628    &       60      &       1887    & 25.00          & 0.0242 \\   \bottomrule
  \end{tabular}
  \caption{A selection of combinational circuits from a collection of BLIF benchmarks from the classical Circuits and Systeme community, including ISCAS'85, ISCAS'89, MCNC'91, LGSynth'91, LGSynth'92. Shown are the qubit and gate costs for the eager cleanup method and in comparison the corresponding cost for the Bennett cleanup. Typically, we obtain a qubit reduction of around $50\%$ or more at the price of only a moderate increase of the total number of Toffoli gates that are needed to implement the reversible circuits. Also shown is the compilation time with respect to an Intel i7-3667 @ 2GHz 8GB RAM processors under Windows 8.1.
  \label{blifChart}
  }
\end{table}

\section{Discussion}\label{discussion}

\subsection{Comparison with Bennett at function boundaries}

A simple alternative cleanup method is to just to apply the original Bennett compute-copy-uncompute method at function boundaries, i.e., whenever a result of a computation is no longer needed, the computation is immediately cleaned up. This method will preform reasonably well in cases where no in-place functions are used. Our primitive operation (the boolean expression) allows for in-place operation though. Bennett-style cleanup  assumes that all operations are out of place and misses out on many opportunities to save on both space and time.

In contrast, consider the case with in-place functions: suppose we have an input ($a$,$b$) and we preform some calculation arriving at $f(b)$.
Now we use an in-place function which maps $(f(b),a) \mapsto (f(b),g(f(b),a))$.
$g(f(b),a)$ is then set as the output of the function.
Using our method we need only clean up $f(b)$ and to create a new in-place function $(a,b) \mapsto (g(f(b),a),b)$.

Even if we are not seeking to create a new in-place function this cleanup strategy can be useful.
Consider the case where the function input is $(a,b)$ and the output is $(a,b,c)$ (with $c$ being allocated inside the function).
Lets say we first apply some function out of place to $a$, $(a,0) \mapsto (a,f(a))$, and preform some other calculation arriving at $f(b)$.
Then we want to use an in-place operation to map $(g(b),f(a)) \mapsto (g(b),h(g(b),f(a)))$.
Using our method we would again only need to clean $f(b)$ to produce the function $(a,b) \mapsto (a,b,c)$ where $c= h(g(b),f(a))$.

A good example for benefits to using in-place updates is the mentioned examples of the SHA-2 algorithm (see Figure~\ref{fig:shahand}).
In the implementation used below functions are calculated at each iteration only to be added in-place to the result.
Using our method we can immediately cleanup those functions since they are not needed after the addition is preformed.
This prevents additional ancilla from being used at each iteration.
Even if each iteration were wrapped in a function the total number of bits used would be higher.
Looking again at Figure \ref{fig:shahand} we see that in the hand implementation each adder can be cleaned up before bits are allocated for the next.
Using the Bennett method with the function boundary at each iteration all of the adders would be cleaned up at the end so the total number of bit needed would be greater.

The graph structure also provides information which could be used in other possible improvements to our method.
For example if we wanted trade of some time for space we could temporally clean up some bits and recreate them later.
When doing this we would want to choose bits that are both easy to compute and have a large gap until the next time they are used in the computation.
The graph structure allows us to quantify both of these metrics.
It also allows us to easily generate the cleanup and re-computation strategies.

Finally, we point out that while in some cases cleaning up at the function boundaries can lead to useful resource reductions, in other cases it can lead to unacceptable time-overheads. It is known for instance in cases of some recursions such as Karatsuba's algorithm for multiplication of $n$-bit integers that by performing a cleanup at the function boundaries, the advantage of a subquadratic algorithm is lost \cite{KPF:2006,SS:2015}, whereas by performing cleanup at the very end of the recursion one can still achieve an algorithm that asymptotically scales in time as $\Theta(n^{\log_2(3)})$. 

\subsection{Conclusions} We presented~\REVS, a compiler and programming language that allows to automate the translation of classical, irreversible programs into reversible programs. Contrary to previous approaches of  reversible programming languages such as the reversible languages R or Janus \cite{Perumalla:2014}, our language does not constrain the programmer. Also, in contrast to previous approaches for implementing Bennett-style strategies such as Quipper \cite{GLR+:2013a,GLR+:2013b} our approach is significantly more space efficient, in some practically relevant cases (hash functions) the savings of our method compared to Bennett-style approaches can even be unbounded. We navigate around the PSPACE-completeness of finding the optimal pebble game by invoking heuristic strategies which seek to identify parts of the program that lead to mutation which then can be implemented via in-place operations. In order to manage the arising data dependencies, we introduced MDD graphs which capture data dependencies as well as data mutation in one graph. We prove that our eager cleanup strategy is correct, provided the mutation paths that occur in the MDD have no inter-path dependency. In case such dependencies arise, we clean up the paths using the standard Bennett strategy, which allows us to compile any program that can be expressed in~\REVS{} into a Toffoli network. 

We found examples where our dependency-graph based method for eager cleanup is better than Bennett's original method, even when Bennett's method is implemented by cleaning up at function boundaries. 

Using an example benchmark suite compiled from classical circuits and systems community, we show that the method can be applied for medium to large scale problems. We also show that hash functions such as SHA-2 and MD5 can be compiled into reversible circuits with ease. 

%\section*{Acknowledgment}

% Bibliography
% ------------
%\bibliographystyle{IEEE}
%\bibliography{revs}
%\nocite{*}

%\newpage
% Appendix
% --------
\appendices

\newpage
\onecolumn{
\section{Implementation of depth optimized adders in~\REVS}\label{app:adders}

In the following we give the details of a depth-optimized adder that is obtained from a standard classical construction which is then subsequently mapped to a reversible circuit using the compilation strategies available in \REVS.
A so-called carry-select adder \cite{HP:90} implements the addition of two $n$-bit integers in depth $O(\sqrt{n})$. The basic idea is to decompose the $n$ bits into $n/k$ blocks of size $k$ each, then to perform an addition for each block separately with two adders, one for each possible value of the incoming carry.
This leads to a doubling of the hardware cost plus the cost for multiplexers to select the correct sequence of adders for the given inputs, however, it also leads to a decrease in circuit depth as both branches can be synthesized for fixed value of the incoming carry and can be executed in parallel.
By choosing the block size to be $k=O(\sqrt{n})$ it can be shown that $O(\sqrt{n})$ depth can be achieved using a circuit size that still scales linear with $n$. A basic F$\#$ implementation of a carry select adder is given below.

%\begin{figure*}
\begin{lstlisting}[language=FSharp]
let carrySelectAdder n =
    let adderSize = int (sqrt (float n))
    let imSpacing = 2*(adderSize+1)
    <@@
        let carryRipple (a:bool array) (b:bool array) (carry:bool) =
            let result =  Array.zeroCreate (adderSize + 1)
            result.[0] <- a.[0] <> b.[0]
            let mutable carry = a.[0] && b.[0]
            result.[1] <- a.[1] <> b.[1] <> carry
            for i in 2 .. adderSize - 1 do
                carry <- (a.[i-1] && (carry <> b.[i-1])) <> (carry && b.[i-1])
                result.[i]  <-  a.[i] <> b.[i] <> carry
            result
        let a = Array.zeroCreate (adderSize * adderSize)
        let b = Array.zeroCreate (adderSize * adderSize)
        let a0b0 = carryRipple a.[0..adderSize-1] b.[0..adderSize-1] false
        let a' = a.[adderSize .. adderSize*adderSize-1]
        let b' = b.[adderSize .. adderSize*adderSize-1]
        let mutable intermediateResults = Array.zeroCreate (0)
        for i in 0 .. adderSize - 2 do
            intermediateResults <- Array.append intermediateResults
			(carryRipple a'.[i*adderSize .. i*adderSize + adderSize-1]
						  b'.[i*adderSize .. i*adderSize + adderSize-1] false)
            intermediateResults <- Array.append intermediateResults
			(carryRipple a'.[i*adderSize .. i*adderSize + adderSize-1]
						  b'.[i*adderSize .. i*adderSize + adderSize-1] true )
        let mutable result =  Array.zeroCreate (0)
        result <- Array.append result (a0b0.[0..adderSize-1])
        let mutable carry = a0b0.[adderSize]
        for i in 0 .. adderSize - 2 do
            let sum =
                if carry then
                    carry <- intermediateResults.[i*imSpacing+adderSize]
                    intermediateResults.[i*imSpacing..i*imSpacing+adderSize-1]
                else
                    carry <- intermediateResults.[i*imSpacing+2*adderSize+1]
                    intermediateResults.
                         [i*imSpacing+adderSize+1..i*imSpacing+2*adderSize]
            result <- Array.append result sum
        result
      @@>
\end{lstlisting}
%\end{figure*}

\newpage
\section{Depth optimized adders in \REVS: resource estimates}\label{app:addersResources}

\begin{table*}[hbt]
\centering
\begin{tabular}{rrrrrrrrr}
\toprule
& \multicolumn{2}{c}{Hand optimized} & \multicolumn{3}{c} {Bennett cleanup}  &  \multicolumn{3}{c} {Eager cleanup} \\
\cmidrule(r){2-3} \cmidrule(r){4-6} \cmidrule(r){7-9}
$n$ & $\#$gates & $\#$qubits & $\#$gates & $\#$qubits & time &
$\#$gates & $\#$qubits & time \\%[1ex]
\midrule
10 & 29     & 34        & 54    & 68       & 0.0262 & 76    & 46       & 0.1132  \\
15 & 51     & 53        & 54    & 68       & 0.0232 & 76    & 46       & 0.0242  \\
20 & 73     & 72        & 118   & 123      & 0.0312 & 194   & 78       & 0.0332  \\
25 & 101    & 93        & 206   & 194      & 0.0402 & 344   & 118      & 0.0452  \\
30 & 120     & 111        & 206   & 194      & 0.0422 & 344   & 118      & 0.0452  \\
35 & 148     & 132        & 206   & 194      & 0.0402 & 344   & 118      & 0.0442  \\
40 & 167     & 150        & 318   & 281      & 0.0532 & 566   & 166      & 0.0542  \\
\bottomrule
\end{tabular}
\caption{\label{tab:adderDepth} Comparison of different compilation strategies for $n$-bit adders that are optimized for overall circuit depth. Shown are the results for a hand-optimized quantum carry lookahead adder and two adders that results from applying the~\REVS{} compiler to a classical depth optimized carry select adder with respect to a cleanup strategy corresponding to Bennett's method and a with respect to the eager cleanup strategy. Observe that the overall space requirement for the quantum circuits derived from the carry save arithmetic increases in a `plateau'-like fashion which is due to the usage of smaller size carry ripple adders that have a number of bits of size $O(\lceil \sqrt{n} \rceil)$.
Also observe that unlike Table \ref{tab:adderSize} here the number of gates differs between the three methods with the hand-optimized version being lowest, then Bennett's cleanup method, followed by the eager cleanup which has the highest gate counts throughout. However, the space requirements for the eager cleanup are better throughout than Bennett's method, and for some values of $n$ even better than the hand-optimized one, i.e., the eager cleanup strategy presents a possible space-time trade-off between circuit size and total number of qubits used. Like in case of the size optimized adders, the compilation times, measured in seconds, are comparable between the Bennett and eager cleanup strategies.}
\end{table*}

%\newpage
\onecolumn{
\section{Implementation of hash functions in \REVS}\label{app:hash}
\subsection{Implementation of SHA-2 in \REVS}

We already presented the core part of the SHA-2 hash function family in Section \ref{sec:data}.
Here we present an implementation of an entire algorithm for computing the entire round functions of the SHA-256 which is a member of the SHA-2 family that hashes a bitstring of arbitrary length to a bitstring of length $256$.
Our implementation actually only implements the round functions, which is the computationally and cryptographically most important part of the cipher, and not the message expansion step.
To describe the round functions, following \cite{SHA} it is convenient to introduce $8$ registers of $32$ bits each and to denote them by $A$, $B$, \ldots, $E$. Further, the following Boolean functions are introduced to describe the round functions:
\begin{align*}
{Ch}(E,F,G) &:= (E \wedge F) \oplus (\neg E \wedge G)\\
{Ma}(A,B,C) &:= (A \wedge B) \oplus (A \wedge C) \oplus (B \wedge C)\\
\Sigma_0(A) &:= (A\!\ggg\!2) \oplus (A\!\ggg\!13) \oplus (A\!\ggg\!22)\\
\Sigma_1(E) &:= (E\!\ggg\!6) \oplus (E\!\ggg\!11) \oplus (E\!\ggg\!25)
\end{align*}

For a given round, the values of all these functions is computed and considered to be $32$ bit integers.
Further, a constant $32$ integer value $K_i$ is obtained from a table lookup which depends on the number $i$ of the given round, where $i \in \{0,\ldots,63\}$ and finally the next chunk of the message $W_i$ is obtained from the message after performing a suitable message expansion is performed as specified in the standard.
Finally, $H$ is replaced according to
\[
H \leftarrow H +
{Ch}(E,F,G) +
{Ma}(A,B,C) +
\Sigma_0(A) +
\Sigma_1(E) + K_i + W_i
\]
and then the cyclic permutation $A \leftarrow H, B \leftarrow A, \ldots, H \leftarrow G$ is performed. The following F$\#$ program performs the computation of the entire round function for a given number of rounds $n$.
%\begin{figure*}
\begin{lstlisting}[language=FSharp]
let exprSha n =
    <@
    let mutable k = Array.zeroCreate 32
    let mutable w = Array.zeroCreate 32
    let mutable a = Array.zeroCreate 32
    let mutable b = Array.zeroCreate 32
    let mutable c = Array.zeroCreate 32
    let mutable d = Array.zeroCreate 32
    let mutable e = Array.zeroCreate 32
    let mutable f = Array.zeroCreate 32
    let mutable g = Array.zeroCreate 32
    let mutable h = Array.zeroCreate 32

    let addMod2_32 (a :bool array) =
        let b = Array.zeroCreate 32
        let mutable c = false
        b.[0] <- b.[0] <> a.[0]
        c <- c <> a.[0]
        a.[0] <- a.[0] <> (c && b.[0])
        for i in 1 .. 30 do
            b.[i] <- b.[i] <> a.[i]
            a.[i-1] <- a.[i-1] <> a.[i]
            a.[i] <- a.[i] <> (a.[i-1] && b.[i])
        b.[31] <- b.[31] <> a.[31]
        b.[31] <- b.[31] <> a.[30]
        for i in 2 .. n - 1 do
            a.[32-i] <- a.[32-i] <> (a.[31-i] && b.[32-i])
            a.[31-i] <- a.[31-i] <> a.[32-i]
            b.[32-i] <- b.[32-i] <> a.[31-i]
        a.[0] <- a.[0] <> (c && b.[0])
        c <- c <> a.[0]
        b.[0] <- b.[0] <> c
        clean c
        b


\end{lstlisting}
%\end{figure*}

%\begin{figure*}
\begin{lstlisting}[language=FSharp]
    let ch (e :bool array)  (f :bool array)  (g :bool array) =
        let t =  Array.zeroCreate 32
        for i in 0 .. 31 do
            t.[i] <- (e.[i] && f.[i]) <> ((e.[i] <> false) && g.[i])
        t

    let ma (a :bool array)  (b :bool array)  (c :bool array) =
        let t =  Array.zeroCreate 32
        for i in 0 .. 31 do
            t.[i] <- (a.[i] && (b.[i] <> c.[i])) <> (b.[i] && c.[i])
        t

    let s0 a =
        let a2 = rot 2 a
        let a13 = rot 13 a
        let a22 = rot 22 a
        let t =  Array.zeroCreate 32
        for i in 0 .. 31 do
            t.[i] <- a2.[i] <> a13.[i] <> a22.[i]
        t

    let s1 a =
        let a6 = rot 6 a
        let a11 = rot 11 a
        let a25 = rot 25 a
        let mutable t =  Array.zeroCreate 32
        for i in 0 .. 31 do
            t.[i] <- a6.[i] <> a11.[i] <> a25.[i]
        t

    for i in 0 .. n - 1 do
        // Inplace add functions which
        // Take an input and add onto
        // the output in the assignment
        h  <- addMod2_32 (ch e f g)
        h  <- addMod2_32 (s0 a)
        h  <- addMod2_32 w
        h  <- addMod2_32 k
        d  <- addMod2_32 h
        h  <- addMod2_32 (ma a b c)
        h  <- addMod2_32 (s1 e)
        let t = h
        //Reassignment for next loop iteration
        h <- g; g <- f; f <- e; e <- d
        d <- c; c <- b; b <- a; a <- t
    Array.concat [a;b;c;d;e;f;g;h]
    @>
\end{lstlisting}
%\end{figure*}
}

%\newpage
\onecolumn{
\subsection{Implementation of MD5 in \REVS}

Another hash function that we implemented in~\REVS{} is the so-called MD5 hash function.
From the cryptographic standpoint, MD5 is no longer of interest as it is considered to be broken \cite{MOV:97}, however, we can still use it as an example to exercise the compiler as the building blocks used in the cipher are well-suited to demonstrate the ease with which a classical function can be turned into a reversible circuit using \REVS.
MD5 hashes a bitstring of arbitrary length to a bitstring of length 128 and, like SHA-256 in the previous section, the cipher consists of a simple round function that gets applied many times to the current internal state and the next bits from the input and a message expansion function that takes the incoming bitstream and partitions it into suitable chunks.
As in case of SHA-256, we focus on the round function and show how it can be implemented by means of a reversible circuit.
The 128 bit state of MD5 can be conveniently expressed using $4$ registers of $32$ bits each, denoted by $A$, $B$, $C$, and $D$. Furthermore, the following Boolean functions are introduced:
\begin{align*}
F(B,C,D) &:= (B \wedge C) \vee (\neg B \wedge D) \\
G(B,C,D) &:= (B \wedge D) \vee (C \wedge \neg D) \\
H(B,C,D) &:=  B \oplus C \oplus D \\
I(B,C,D) &:=  C \oplus (B \vee \neg D).
\end{align*}

For a given round of index $i$ precisely one of the functions $f_i \{ F, \ldots, I\}$ is chosen according to a fixed schedule, then the value $f(B,C,D)$ is computed and then $A$ is updated as
$ A \rightarrow A \oplus f(B,C,D) \oplus M_i \oplus K_i$ is computed, where $K_i$ are precomputed constants, and $M_i$ are the bits of the message after message expansion has been performed. Subsequently, a bit rotation to the left by $s_i$ positions, where $s_i$ again are precomputed constants, and a further xor sum with the $B$ register is performed and the overall result is stored in the $A$ register. Finally, a cyclic rotation $A \rightarrow D$, $B \rightarrow A$, $C \rightarrow B$, $D \rightarrow C$ is performed which is the result of the $i$th round. The following F$\#$ program performs the computation of the entire round function for a given number of rounds $n$.

\begin{lstlisting}[language=FSharp]
let exprMD5 n =
    let s = [| 7; 12; 17; 22;  7; 12; 17; 22;  7; 12; 17; 22;  7; 12; 17; 22;
               5;  9; 14; 20;  5;  9; 14; 20;  5;  9; 14; 20;  5;  9; 14; 20;
               4; 11; 16; 23;  4; 11; 16; 23;  4; 11; 16; 23;  4; 11; 16; 23;
               6; 10; 15; 21;  6; 10; 15; 21;  6; 10; 15; 21;  6; 10; 15; 21 |]
    <@
    let mutable M = Array.zeroCreate 512
    let K = Array.zeroCreate 32
    let mutable A = Array.zeroCreate 32
    let mutable B = Array.zeroCreate 32
    let mutable C = Array.zeroCreate 32
    let mutable D = Array.zeroCreate 32

    let addMod2_32 (a :bool array) =
        let b = Array.zeroCreate 32
        let mutable c = false
        b.[0] <- b.[0] <> a.[0]
        c <- c <> a.[0]
        a.[0] <- a.[0] <> (c && b.[0])
        for i in 1 .. 30 do
            b.[i] <- b.[i] <> a.[i]
            a.[i-1] <- a.[i-1] <> a.[i]
            a.[i] <- a.[i] <> (a.[i-1] && b.[i])
        b.[31] <- b.[31] <> a.[31]
        b.[31] <- b.[31] <> a.[30]
        for i in 2 .. n - 1 do
            a.[32-i] <- a.[32-i] <> (a.[31-i] && b.[32-i])
            a.[31-i] <- a.[31-i] <> a.[32-i]
            b.[32-i] <- b.[32-i] <> a.[31-i]
        a.[0] <- a.[0] <> (c && b.[0])
        c <- c <> a.[0]
        b.[0] <- b.[0] <> c
        clean c
        b

    let F (B : bool array)  (C : bool array)  (D : bool array) =
        let t =  Array.zeroCreate 32
        for i in 0 .. 31 do
            t.[i] <- (B.[i] && C.[i]) || ((B.[i] <> false) && D.[i])
        t

    let G (B : bool array)  (C : bool array)  (D : bool array) =
        let t =  Array.zeroCreate 32
        for i in 0 .. 31 do
            t.[i] <- (D.[i] && B.[i]) || ((D.[i] <> false) && C.[i])
        t

    let H (B : bool array)  (C : bool array)  (D : bool array) =
        let t =  Array.zeroCreate 32
        for i in 0 .. 31 do
            t.[i] <- (B.[i] <> C.[i] <> D.[i])
        t

    let I (B : bool array)  (C : bool array)  (D : bool array) =
        let t =  Array.zeroCreate 32
        for i in 0 .. 31 do
            t.[i] <- (C.[i] <> (B.[i] || (D.[i] <> false)))
        t
\end{lstlisting}

\begin{lstlisting}[language=FSharp]
    for i in 0 .. 15 do
        let mutable t = Array.zeroCreate 32
        t <- addMod2_32 A
        t <- addMod2_32 (F B C D)
        t <- addMod2_32 K
        t <- addMod2_32 M.[32*i..32*i+31]
        t <- rot s.[i] t
        B  <- addMod2_32 t
        let temp = D
        D <- C; C <- B; A <- temp

 	for i in 16 .. 31 do
        let mutable t = Array.zeroCreate 32
        t <- addMod2_32 A
        t <- addMod2_32 (G B C D)
        t <- addMod2_32 K
        t <- addMod2_32 M.[32*((5*i+1)%16)..32*((5*i+1)%16)+31]
        t <- rot s.[i] t
        B  <- addMod2_32 t
        let temp = D
        D <- C; C <- B; A <- temp

 	for i in 32 .. 47 do
        let mutable t = Array.zeroCreate 32
        t <- addMod2_32 A
        t <- addMod2_32 (H B C D)
        t <- addMod2_32 K
        t <- addMod2_32 M.[32*((3*i+5)%16)..32*((3*i+5)%16)+31]
        t <- rot s.[i] t
        B  <- addMod2_32 t
        let temp = D
        D <- C; C <- B; A <- temp

	for i in 48 .. 63 do
        let mutable t = Array.zeroCreate 32
        t <- addMod2_32 A
        t <- addMod2_32 (I B C D)
        t <- addMod2_32 K
        t <- addMod2_32 M.[32*((7*i)%16)..32*((7*i)%16)+31]
        t <- rot s.[i] t
        B  <- addMod2_32 t
        let temp = D
        D <- C; C <- B; A <- temp
    Array.concat [A; B; C; D]
    @>
\end{lstlisting}

\end{document}